\documentclass[12pt]{article}
\usepackage{amssymb,amsmath}
\usepackage{amsthm}
\usepackage{graphicx}
\usepackage{caption}
\usepackage{subcaption}
\usepackage{color}
\usepackage{tikz-cd}

\makeatletter

%change sectioning commands

\@addtoreset{equation}{section}
\def\section{\@startsection {section}{1}{\z@}{-2.5ex plus -1ex minus
 -.2ex}{1.3ex plus .2ex}{\large\bf}}
\def\subsection{\@startsection{subsection}{2}{\z@}{-2.25ex plus%
 -1ex minus -.2ex}{0.5ex plus .2ex}{\bf}}

%\advance \voffset by -0.6in
\advance \voffset by -0.8in
\advance \hoffset by -0.8in
\textheight=9.2in
\textwidth=6.9in

  % {}  

\def\dd{\mathrm{d}}

\def\pie{\pi^{\text{\tiny eq}}}
\def\pic{\pi^{\text{\tiny co}}}
\def\Pie{\Pi^{\text{\tiny eq}}}
\def\Pic{\Pi^{\text{\tiny co}}}
\def\bee{\begin{equation}}
\def\eee{\end{equation}}
\def\bea{\begin{align}}
\def\eea{\end{align}}

\def\Ad{\mbox{Ad}}

\def\bp{{\mbox{\boldmath $p$}}}

\def\bq{{\mbox{\boldmath $q$}}}
\def\be{{\mbox{\boldmath $e$}}}
\def\bk{{\mbox{\boldmath $k$}}}
\def\bx{{\mbox{\boldmath $x$}}}
\def\by{{\mbox{\boldmath $y$}}}
\def\bq{{\mbox{\boldmath $q$}}}

\def\bs{{\mbox{\boldmath $s$}}}

\def\bpm{\begin{pmatrix}}
\def\epm{\end{pmatrix}}

\newcommand{\NN}{\mathbb{N}}
\newcommand{\ZZ}{\mathbb{Z}}

\newcommand{\RR}{\mathbb{R}}
\newcommand{\CC}{\mathbb{C}}
\newcommand{\tr}{{\rm tr}}
\def\bee{\begin{equation}}
\def\eee{\end{equation}}
\newtheorem{theorem}{Theorem}[section]
\newtheorem{lemma}[theorem]{Lemma}

\newtheorem{proposition}[theorem]{Proposition}

\newtheorem{definition}[theorem]{Definition}

\def\Lor{L_3^{+\uparrow}} 

%SI Additional Data
\providecommand{\abs}[1]{\lvert#1\rvert}

\usepackage{nicefrac}
\usepackage{braket}
\usepackage{comment}

%Corrections

\begin{document}
\parskip 3pt
\parindent 8pt
%\begin{flushright}
%EMPG-18-..
%\end{flushright}

\begin{center}

{\Large \bf Non-commutative waves  for gravitational anyons }

\baselineskip 20 pt

\vspace{.2cm}

{ \bf Sergio Inglima and  Bernd~J.~Schroers}   \\
Department of Mathematics and Maxwell Institute for Mathematical Sciences \\
 Heriot-Watt University, 
Edinburgh EH14 4AS, United Kingdom \\ 
\tt{sfi1@hw.ac.uk} and \tt{b.j.schroers@hw.ac.uk}

\vspace{0.3cm}

{December 2018}
\baselineskip 16 pt

\end{center}

\begin{abstract}
\noindent  
We revisit the representation theory  of the quantum double of the universal cover of the  Lorentz group in 2+1 dimensions,  motivated by its role as  a deformed Poincar\'e symmetry and symmetry algebra in (2+1)-dimensional quantum gravity. We express the unitary irreducible representations in terms of covariant, infinite-component fields on curved  momentum space satisfying algebraic spin and mass constraints. Adapting and applying the method of  group Fourier transforms, we obtain covariant fields on (2+1)-dimensional Minkowski space which necessarily depend on an additional internal and  circular dimension. The momentum space constraints turn into differential or exponentiated differential operators, and the  group Fourier transform induces a  star product on Minkowski space and the internal space which is essentially a version of Rieffel's deformation quantisation via convolution.
\end{abstract}

%\centerline{PACS numbers: 04.20.Cv, 02.20.Qs, 02.40.-k}

\section{Introduction}

The possibility of anyonic statistics in  two spatial dimensions lies at the root of  the peculiarity and intricacy of  planar phenomena in  quantum physics, ranging from the quantum Hall effect to potential uses of anyons in topological quantum computing \cite{AnyonBook}. The mathematical  origin of anyonic  statistics is the infinite connectedness of the planar rotation group $\mathrm{SO}(2)$ which is topologically a circle. The goal of this paper, in brief, is to explore the consequences of this fact for quantum gravity in 2+1 dimensions, where  the infinite connectedness  is doubled  and appears in both  momentum space  and  the rotation group.

Our strategy  in  pursuing this goal is to extend and generalise the method developed in \cite{SemiDual,SchroersWilhelm}, which proceeds from a definition of the   symmetry algebra, via a covariant formulation of its irreducible  unitary representations (UIRs) in momentum space to, finally,  covariant fields in spacetime obeying  differential or finite-difference  equations.  One of the upshots of our work  is  a construction of non-commutative plane waves  for  gravitational anyons via a group Fourier transform. While this transform  arose in relatively recent literature in quantum gravity \cite{FL,FM,Raasakka,GOR} we note that it is essentially  an example and extension of Rieffel's deformation quantisation of the canonical Poisson structure on the  dual of a Lie algebra via convolution \cite{Rieffel}.

Anyonic behaviour occurs in both non-relativistic and relativistic physics, and for the same topological reason. The  proper and  orthochronous Lorentz group $\Lor$  in 2+1 dimensions retracts to $\mathrm{SO}(2)$, and is  therefore infinitely connected. Its universal cover, which  cannot be realised as a matrix group, governs the properties of relativistic anyons.  As we shall review,  the double cover of $\Lor$ is the matrix group $\mathrm{SU}(1,1)$ or, equivalently, $ \mathrm{SL}(2,\RR)$. In this paper, we  write the universal cover  as $\widetilde{\mathrm{SU}}(1,1)$.

In the spirit of  Wigner's classification of particles \cite{Wigner},  
relativistic anyons are  classified by the UIRs of the universal cover of the (proper, orthochronous)  Poincar\'e group in 2+1 dimensions, which is the semidirect product of $\widetilde{\mathrm{SU}}(1,1)$ with the group of spacetime translations. It is natural to identify the latter with the dual of the  Lie algebra of $\mathrm{SU}(1,1)$. Then the universal cover of the Poincar\'e group is $\widetilde{\mathrm{SU}}(1,1)\ltimes \mathfrak{su}(1,1)^*$, with the first factor acting on the second by the co-adjoint action. 
While the UIRs of the Poincar\'e group in 2+1 dimensions are labelled by a real mass parameter  and an integer spin parameter, the corresponding mass and spin parameters for the universal cover both take arbitrary real values \cite{Grigore}. In  other words, going to the universal cover puts mass and spin on a more equal footing.

The canonical construction of the UIRs of the Poincar\'e group in 2+1 dimensions leads to states being realised as functions on momentum space, obeying constraints \cite{Binegar}. Writing these constraints in a Lorentz-covariant way, and Fourier transforming leads to  standard  wave equations of relativistic physics like the Klein-Gordon or Dirac equation. Relativistic wave equations for anyons have also been constructed, but for spins which are not half-integers they require infinite-component wave functions, and the derivation of the equations  is not   straightforward \cite{Gitman,JackiwNair,Plyushchay1,Plyushchay2}. 

Our treatment will naturally lead us to an equation derived via a different route      by Plyushchay in \cite{Plyushchay1,Plyushchay2}. It makes use of the discrete series UIR of $\widetilde{\mathrm{SU}}(1,1)$, but, as pointed out by Plyushchay, it is essentially a dimensional reduction of an equation already studied by Majorana \cite{Majorana,ST}.

The focus of this paper is  a deformation of the universal cover of the Poincar\'e group to a quantum group which arises in (2+1)-dimensional quantum gravity. In 2+1 dimensions, there are no propagating gravitational degrees of freedom, and the phase space of gravity interacting with a finite number of particles and in a universe where spatial slices  are  either compact or accompanied by suitable boundary conditions at spatial infinity  is finite-dimensional. In those cases where the resulting phase space could be quantised, the Hilbert space of the quantum theory can naturally be constructed out of unitary representations of the   quantum double of $\Lor$  or one of its covers \cite{MeusburgerSchroers1,CMBSCombQuant,LQGQDouble}. In a sense that can be made precise, this  double is a deformation of the Poincar\'e group in  2+1 dimensions \cite{EuclidSchroers,BaisScatt}, with the   linear momenta  in $ \mathfrak{su}(1,1)^*$ (the generators of  translations)  being replaced by functions on $\Lor$  or one of its covers.

In the quantum  double of $\Lor$,  Lorentz transformations and translation are implemented via  Hopf algebras which are in duality, namely   the group algebra of $\Lor$  and the dual algebra of functions on $\Lor$. The quantum double is a ribbon Hopf algebra whose $R$-matrix can be given explicitly. The  unitary irreducible  representations  describing massive particles are labelled by an integer spin and a mass parameter taking values on a circle.

In analogy with the  treatment of the Poincar\'e group, one could define the  universal cover of the  double of $\Lor$  by replacing the group algebra of the Lorentz group with  the group algebra of the universal cover $\widetilde{\mathrm{SU}}(1,1)$. There are  indications that this is required  when applying the quantum double to quantum gravity in 2+1 dimensions. In particular,  several independent arguments lead to the conclusion that, in (2+1)-dimensional gravity, the spin $s$ is quantised  in units which depends on its mass $m$ according to 
\begin{equation}
s=\frac{n}{1-\frac{\mu}{2\pi}},
\label{SpinQuant}
\end{equation}
where $n \in \ZZ$, $\mu =8\pi mG$ and  $G$ is Newton's gravitational constant, see   \cite{BaisScatt,Matschull}.

Covering the Lorentz transformations without changing the momentum algebra would destroy the duality between the two.  
It is therefore more natural to also consider a universal covering of the momentum algebra, i.e.,  to identify the momentum  algebra with the function algebra on  $\widetilde{\mathrm{SU}}(1,1)$. The resulting quantum double of the universal cover of the Lorentz group, called Lorentz double in \cite{BaisScatt},  is a  ribbon-Hopf algebra and has UIRs   describing massive particles  which are  labelled by a  spin parameter  $s$  and a mass  parameter $\mu$ for which \eqref{SpinQuant} makes sense.  

In this paper, we consider  the Lorentz double and derive a new formulation of  its  UIRs in terms of infinite-component functions on momentum space $\widetilde{\mathrm{SU}}(1,1)$ obeying  Lorentz-covariant constraints. We  extend and use the notion of group Fourier transforms \cite{FL,FM,Raasakka,GOR}  to derive covariant wave equations  on Minkowski space equipped with a $\star$-product.

  Our method extends  earlier work  in \cite{SchroersWilhelm} where  analogous relativistic wave equations for massive particles were obtained from the UIRs of the quantum double of the Lorentz group. The transition to the universal cover poses two separate challenges. Our  wave functions  in momentum space now live on $\widetilde{\mathrm{SU}}(1,1)$,  and  they take values in   infinite-dimensional UIRs of $\widetilde{\mathrm{SU}}(1,1)$ called  the discrete series.  The change in momentum space  leads to a particularly natural version of  the  group Fourier transform, essentially because group-valued momenta can be parametrised  bijectively via the exponential map and one additional  integer label. In order to include the integer  in our Fourier transform we are forced to introduce a dual circle on the spacetime side. The emergence of a compact additional dimension is a remarkable and intriguing aspect of our construction. 
  
  Fourier transforming the  algebraic spin and mass constraints from momentum space to Minkowski space produces  equations on non-commutative Minkowski space which involve either differential operators or    the exponential  of a first-order differential operator.  We end the paper with a short discussion of these non-commutative wave equations for gravitational anyons, leaving a detailed study for future work. Non-commutative waves for anyons have previously been discussed in the literature \cite{HP}. However, the discussion there is in  the  context  of non-relativistic limits    rather than the inclusion of gravity, and the mathematics is rather different.   
  
The paper is organised as follows.
 In Sect.~2, we introduce our notation and  review the definition and parametrisation  of the universal cover  $\widetilde{\mathrm{SU}}(1,1)$ as well as the discrete series UIRs. We also revisit the UIRs of the Poincar\'e group and  briefly summarise the covariantisation of the UIRs and their Fourier transform, following \cite{SchroersWilhelm}. In Sect.~3, we generalise  the covariantisation procedure to the universal cover of the Poincar\'e group, thus obtaining wave equations  for infinite-component anyonic wave functions. Our version of these equations is essentially that considered by Plyushchay \cite{Plyushchay1,Plyushchay2} but  our derivation of them appears to be   new.  Sect.~4 extends the analysis to the Lorentz double.  We derive a Lorentz-covariant form of the UIRs, and point out that one of the defining constraints, called the spin constraint, can be expressed succinctly in terms of the ribbon element of the  Lorentz double. The group Fourier transform on 
 $\widetilde{\mathrm{SU}}(1,1)$ requires a parametrisation of this group via the exponential map, and we discuss this in some detail. 
 We use the Fourier transform to derive non-commutative wave equations, and then use  the  short final Sect.~5  to discuss our results and to point out avenues for further research.

\section{Poincar\'e symmetry  and massive particles in 2+1 dimensions}
\label{conventions}
We review the symmetry  group  of  (2+1)-dimensional  Minkowski space:  its Lie algebra, its various covering  groups and their representation theory.  Unfortunately there is no single book or paper which covers these topics in conventions which are convenient for our purposes.  We therefore adopt a mixture of the conventions in the papers \cite{BaisScatt} and  \cite{SchroersWilhelm}  and the book \cite{Sally}, which are key references for us. 

\subsection{Minkowski space and the double cover of the Poincar\'e group}
\label{poincareconventions}
We denote $(2+1)$ dimensional Minkowski space by $\mathbb{R}^{2,1}$, with the convention that the Minkowski metric is mostly minus. Vectors in  $\mathbb{R}^{2,1}$ will be denoted  by $\bx=(x^0,x^1,x^2)$,  with Latin indices for components and the inner product given by
\begin{equation}
\eta(\bx,\by)=\eta_{ab}x^ay^b=x^0y^0-x^1y^1-x^2y^2.
\end{equation}

The group of linear transformations that leave $\eta$ invariant is the Lorentz group $L_3=O(2,1)$. This group has four connected components, but we  are mainly interested in the component connected to the identity, i.e., the subgroup of proper orthochronous Lorentz transformations, denoted $\Lor$.

The group of affine transformations that leave  the Minkowski metric invariant is the semi-direct product $L_3\ltimes \RR^3$ of the Lorentz group with the abelian group of translations. We call its identity component  Poincar\'{e} group  and denote it as 
\begin{equation}
P_3=\Lor \ltimes \RR ^3.
\end{equation}

The Lie algebra  $\mathfrak{p}_3$ of the Poincar\'e group is spanned by rotation and boost generators $J^0,J^1,J^2$, and time and space translation generators $P_0,P_1,P_2$. They can be chosen so that the  commutators are 
\bee
\label{poincom}
[J^a,J^b]=\epsilon^{abc}J_c, \quad [J_a,P_b]=\epsilon_{abc} P^c, \quad [P_a,P_b]=0.
\eee
Here and in the following $\epsilon_{abc}$ is the totally antisymmetric tensor with $\epsilon_{012}=1$, and indices are raised and lowered with the Minkowski metric $\eta=\text{diag} (1,-1,-1)$. Note that this means in particular that $-J^0$ is the generator of mathematically positive rotations in momentum space. This affects our conventions for the sign of  the spin later in this paper.

In quantum mechanics, classical symmetries described by a Lie group $G$ are implemented by projective representations, which, in the case of the Lorentz group,  may equivalently  be described by unitary representations of the universal covering group. In the case of relativistic symmetries in 3+1 dimensions the universal cover of the Lorentz group is its double cover and is isomorphic to $\mathrm{SL(2,\CC)}$. In 2+1 dimensions the double cover of $\Lor$ is isomorphic to $\mathrm{SL}(2,\RR)$ or, equivalently,  $\mathrm{SU}(1,1)$,  but this is not the universal cover. Therefore a choice has to be made as to which covering group one should implement in the quantum theory. The paper  \cite{SchroersWilhelm},  to which we will refer frequently for details, works with the  double cover in the realisation $\mathrm{SL}(2,\RR)$.

 We will consider the universal cover here. For our purposes it is  more convenient to  first consider the double cover  $ \mathrm{SU}(1,1)$ because it allows for an easy transition to the universal cover.  In order to translate   results  from \cite{SchroersWilhelm},  the reader will need to apply  the unitary matrix
\bee
\label{htraf}
h=\frac 12 \bpm 1+i & -1-i \\ 1-i & \phantom{-}1-i \epm,
\eee
which conjugates  $\mathrm{SL}(2,\RR)$ into $ \mathrm{SU}(1,1)$  within $\mathrm{SL}(2,\CC)$, i.e., $h^{-1}\mathrm{SL}(2,\RR)h= \mathrm{SU}(1,1)$.

 We use the following basis of the Lie algebra $\mathfrak{su}(1,1)$
\bee
\label{su(1,1) algebra}
s^0 =-\frac{i}{2}\sigma_3=\bpm -\frac{i}{2} & 0 \\ \phantom{-} 0 & \frac{i}{2} \epm, \quad 
s^1 =-\frac{1}{2}\sigma_1=\bpm \phantom{-} 0 & -\frac{1}{2} \\ -\frac{1}{2}  & \phantom{-} 0 \epm, \quad 
s^2 =\frac 12 \sigma_2=\bpm 0 & -\frac{i}{2} \\ \frac{i}{2} & \phantom{-} 0 \epm,
\eee
which are normalised to have the commutation relations  of the Lorentz part of \eqref{poincom}: 
\bee
\label{su11coms}
[s_a,s_b]=\epsilon_{abc}s^c.
\eee
The basis  $t^a$, $a=0,1,2,$ of   $\mathfrak{sl}(2,\RR)$  used in  \cite{SchroersWilhelm} 
is related to  our basis of $\mathfrak{su}(1,1)$  via $
s^a=h^{-1}t^ah.$

An arbitrary matrix $M \in \mathrm{SU}(1,1)$  can be parametrised in terms of two complex numbers $a,b$ which satisfy $|a|^2 -|b|^2 =1$ via
\bee
\label{Mab}
 M = \bpm a & \bar{b} \\ b & \bar{a} \epm . 
\eee
However,  a more convenient parametrisation for the  extension to the universal cover is obtained
by introducing an angular coordinate $ \omega \in [0,4\pi)$ and a complex number $\gamma=b/a$ of modulus $|\gamma|<1$
so that 
\bee
\label{abangles}
a=\frac{1}{\sqrt{1-|\gamma|^2}}e^{i \frac{\omega} {2}}  , \qquad 
b=\frac{\gamma}{ \sqrt{1-|\gamma|^2} }e^{ i\frac{\omega} {2} }.
\eee
Using this parametrisation, one can see that $\mathrm{SU}(1,1)$ is topologically the open solid torus  $S^1 \times D$, where $D$ is the open  unit disk. 
Note that an   $\mathrm{SU}(1,1)$ element parametrised via \eqref{abangles}   can be written as 
\bee
\label{Momegamma}
M(\omega,  \gamma)  =\frac{1}{\sqrt{1-|\gamma|^2}}\bpm  e^{i\frac{\omega }{2}  } & \bar{\gamma} e^{-i \frac{\omega }{2}}   \\
\gamma e^{i \frac{\omega }{2}}  & e^{-i \frac{\omega }{2}}  \epm= \frac{1}{\sqrt{1-|\gamma|^2}}
\bpm 1 & \bar{\gamma}   \\
\gamma  &   1 \epm \bpm  e^{i\frac{\omega }{2}  } & 0   \\
0 & e^{-i \frac{\omega }{2}}  \epm.
\eee

We will need to understand the conjugacy classes of $\mathrm{SU}(1,1)$ for various applications in this paper. These are obtained from the well-known conjugacy classes of $\mathrm{SL}(2,\RR)$ (see \cite{SchroersWilhelm}, for example) by conjugation with $h$ and therefore determined by the trace (which, for determinant one, fixes the eigenvalues up to ordering) plus additional data. 
Elements $u\in \mathrm{SU}(1,1)$ with absolute value of the  trace less than 2  are called elliptic elements. They are conjugate to rotations  of the form
\bee
\label{elliptic}
\begin{pmatrix} e^{i\frac{\alpha}{2}}  & 0  \\
0 & e^{-i\frac{\alpha}{2}} 
\end{pmatrix},  \quad  \alpha \in (0,2\pi) \cup (2\pi, 4\pi).
\eee
Each value of $\alpha$ labels one conjugacy class (so there are two conjugacy classes for each value of the trace). 
Elements with absolute value of the trace equal to 2  include  $\pm \text{id}$ (forming a conjugacy class each) and parabolic elements, which are conjugate to
\bee
\label{parabolic}
 \begin{pmatrix} \pm1-i\zeta & \zeta  \\
\zeta &  \pm 1+i\zeta
\end{pmatrix}, \quad \zeta \in (-\infty, 0)\cup (0,\infty).
\eee
Each choice of sign on the diagonal and each value of $\zeta$ labels one conjugacy class.
Element with  absolute value of the trace greater than two are called hyperbolic. They are conjugate to
\bee
\label{hyperbolic}
\pm  \begin{pmatrix} \cosh \xi & \sinh\xi  \\
\sinh\xi &  \cosh \xi
\end{pmatrix}, \quad \xi \in (0,\infty).
\eee
 Each choice of sign and each  value of $\xi$ determines one conjugacy class (negative values are not needed since  conjugation with $\sigma_3\in \mathrm{SU}(1,1)$ flips the sign).

The  double cover of the Poincar\'e group is
\begin{equation}
\tilde{P}_3 \simeq \mathrm{SU}(1,1) \ltimes \RR^3,
\end{equation}
but  for our purposes  it is natural  to identify the translation subgroup with the vector space $\mathfrak{su}(1,1)^*$, 
and  to view $\tilde{P}_3$ as 
\bee
\label{P3Iso}
\tilde{P}_3 \simeq \mathrm{SU(1,1)} \ltimes \mathfrak{su}(1,1)^*, 
\eee
with $\mathrm{SU(1,1)}$ acting  on translations via the co-adjoint action $\Ad^*$.  More precisely, the product of elements $(g_1,a_1), (g_2,a_2) \in \tilde{P}_3$ in  the conventions of  \cite{SchroersWilhelm,SemiDual} is 
\bee
\label{product}
(g_1,a_1) (g_2,a_2) = (g_1g_2,\Ad^*_{g_2}a_1 +a_2).
\eee 

The motivation for the interpretation \eqref{P3Iso} of the double cover of the Poincar\'e group  comes from 
the formulation of 3d gravity as a Chern-Simons theory with the Poincar\'e group as a gauge group. This  requires an invariant and non-degenerate pairing on $\mathfrak{p}_3$.  The, up  to scale, a unique such pairing is the dual pairing between the  Lorentz generators and the translation generators, i.e., the canonical pairing on the Lie algebra 
\bee
\label{p3algebra}
\mathfrak{p}_3=  \mathfrak{su}(1,1)\oplus  \mathfrak{su}(1,1)^*.
\eee

As remarked in \cite{SchroersWilhelm} and for the Euclidean case in \cite{SemiDual},  the  dual  basis of  the translation generators $P_a$ (now viewed as a basis of $ \mathfrak{su}(1,1)^*$)  may have a different normalisation to the basis ${s^a}$  which is fixed by the commutation relations \eqref{su(1,1) algebra} .
 Adapting the conventions of \cite{SemiDual,SchroersWilhelm}, and working with the generators $s^a$ for $ \mathfrak{su}(1,1)$, we 
   write the pairing in our basis  as 
\bee
\label{pairing}
\langle s_a, P_b\rangle = -\frac 1 \lambda \eta_{ab},
\eee
where  the scale  $\lambda $ is a constant of dimension inverse mass. In the context of 2+1 gravity it is  related to Newton's gravitational constant $G$  via $\lambda=8\pi G$. 

With the usual identification of  momentum space as the dual of spacetime translations, momenta $ p$  in 2+1 dimensions now naturally live in   $(\mathfrak{su}(1,1)^*)^*\simeq \mathfrak{su}(1,1)$.
  For consistency with \eqref{pairing} we set
\begin{equation}
P^{*a}=-\lambda s^a,
\end{equation}
so that  a general element  $p$ of momentum space may be expressed as 
\begin{align}
\label{pexpand}
p=p_a P^{*a}=- \lambda p_as^a.
\end{align}
The reason for inserting a minus sign here  is that  $- s^0$ generates  mathematically positive rotation, as mentioned after \eqref{poincom}.

While the pairing between Lorentz and translation generators is canonical (up to scale), the  inner product between two momenta $p$ and $q$ in $\mathfrak{su}(1,1)$  requires the trace. We define the inner product as  
\begin{equation}
-\frac{2}{\lambda ^2}\text{tr}(pq)=p_aq^a.
\end{equation}
This is invariant under the (adjoint) action of $\mathrm{SU(1,1)}$.  Using the  convention to mark coordinate vectors by bold letters we 
write
\bee
\bp\cdot \bq = p_aq^a, 
\eee  
and occasionally also 
\bee
\bp\cdot \bs = p_as^a.
\eee
We  also require a notation for the absolute value of the norm of $\bp$. We  write this as
\bee
|\bp| = \sqrt{ |\bp\cdot \bp|}.
\eee

As pointed out in the Introduction,  $\mathrm{SU(1,1)}$ and its universal cover both   play two  mutually  dual roles in this paper as Lorentz symmetry and as curved momentum space. In the latter context we will require one further parametrisation of  ${\mathrm{SU}}(1,1)$, namely the one obtained  via the exponential map.  By `conjugating' the detailed discussion of the corresponding parametrisation of $\mathrm{SL}(2,\RR)$ in  \cite{SchroersWilhelm} with \eqref{htraf},  one checks that  any element $u\in {\mathrm{SU}}(1,1)$ can, up to  a sign, be written as the exponential of the Lie-algebra valued momentum $p$, i.e., 
\bee
\label{exponential}
u =\pm \exp(p).
\eee
We will revisit this fact and its geometrical interpretation in Sect.~\ref{anywave}. In that context we will also need the following explicit form of the exponential: 
\bee
\label{expoformula}
\exp(-\lambda p_as^a)= c(|\bp|)\,\text{id} - 2 s(|\bp|)\hat{p}\cdot\bs,
\eee
where we introduced the function 
\bee
s(|\bp|) = \begin{cases} \sin\left(\lambda \frac{ |\bp| }{2}\right)  \; & \text{if} \;  \bp^2  >0 
\\ \lambda \; & \text{if} \; \bp^2  =0 \\ \sinh\left(\lambda\frac{|\bp| }{2}\right)\; & \text{if} \;  \bp^2  <0 
 \end{cases} \quad \text{and} \quad 
 c(|\bp|) = \begin{cases} \cos\left(\lambda \frac{|\bp| }{2}\right)  \; & \text{if} \;  \bp^2  >0 
\\ 1  \; & \text{if} \;  \bp^2  =0 \\ \cosh\left(\lambda\frac{|\bp|}{2}\right)\;&  \text{if} \;  \bp^2  <0 
 \end{cases}, 
\eee
as well as the generalised unit-vector
\bee
\hat{p} = \begin{cases}\frac{\bp}{|\bp|}\; & \text{if} \; \bp^2 \neq  0 
\\ \bp \; & \text{if} \; \bp^2 =0  
 \end{cases}.
\eee
The functions $s$ and $c$ satisfy a generalised version of the Pythagorean formula:
\bee
\label{pythagoras}
c^2(|\bp|) + \hat{p}^2 s^2(|\bp|)=1.
\eee

\subsection{The universal cover the Lorentz group  and the discrete series representation } \label{SL2 Coverings}
In this paper we shall consider anyonic particles and for this we will need to work with the universal cover of the Poincar\'{e} group which we denote  by  $P^\infty_3$: 
\begin{equation}
\label{UCovPoinc}
P^\infty_3 =\widetilde{\mathrm{SU}}(1,1) \ltimes \mathfrak{su}(1,1)^*,
\end{equation}
 where $\widetilde{\mathrm{SU}}(1,1)$ is the universal cover of $\mathrm{SU}(1,1)$. The group  $\widetilde{\mathrm{SU}}(1,1)$ is not a matrix group, but, as we will explain in some detail below, its elements are conveniently parametrised by a real number $\omega$ and complex number $\gamma$ of modulus less than $1$. Thus we  write elements  of $\widetilde{\mathrm{SU}}(1,1)$ as pairs $(\omega,\gamma)$ and elements of $P^\infty_3$
  as pairs $((\omega, \gamma),a)$, with   $a\in\mathfrak{su}(1,1)^*$, so that the product is 
 \bee
\label{productcover}
((\omega_1,\gamma_1),a_1) ((\omega_2,\gamma_2),a_2) = ((\omega_1,\gamma_1)(\omega_2,\gamma_2),\Ad^*_{(\omega_2,\gamma_2)}a_1 +a_2).
\eee

The universal cover of $\mathrm{SU}(1,1)\simeq S^1\times D$,  denoted $\widetilde{\mathrm{SU}}(1,1)$,  is diffeomorphic to  the product $\RR \times D$, where  the circle factor $S^1$ has been covered by the real line. We extend our parametrisation \eqref{Momegamma} of $\mathrm{SU}(1,1)$ to the universal cover by simply allowing the angular variable $\omega$ to take values in $\RR$. We then identify elements of  $\widetilde{\mathrm{SU}}(1,1)$   with pairs $(\omega, \gamma) \in \RR \times D$.  The product  $(\omega_1,\gamma_1)(\omega_2,\gamma_2)$  is the element $(\omega,\gamma)$ given by an analytic extension of the  formulae one obtains when writing the matrix product  of elements of the form \eqref{Momegamma} in terms of the parameters\footnote{The group product is essentially the one given in \cite{Sally} except that we have parametrised the $S^1$ in $\mathrm{SU}(1,1)$ with $[0,4\pi)$ rather than $[0,2\pi)$. This  ensures that a final projection to the Lorentz group results in a complete spatial rotation having an angle of $2\pi$.}:
\begin{align}
\label{SU(1,1)tildeproduct}
\gamma &= (\gamma_1+ \gamma_2e^{-i\omega_1})(1+ \bar{\gamma}_1 \gamma_2e^{-i\omega_1})^{-1} \nonumber \\
\omega &= \omega_1 +  \omega_2  + \frac 1 i \ln\left(\frac{1+ \bar{\gamma}_1 \gamma_2e^{-i \omega_1} }{ 1+ \gamma_1 \bar{\gamma_2}e^{i\omega_1}}\right).
\end{align}
Here the logarithms of the form $\ln (1+w)$ are defined in terms of the usual (Mercator) power series, which returns the principal value of  $\ln(1+w)$. With $\gamma_1$ and $\gamma_2$ inside the unit disk, this means  in particular
\bee
\frac 1 i \ln\left(\frac{1+ \bar{\gamma}_1 \gamma_2e^{-i \omega_1} }{ 1+ \gamma_1 \bar{\gamma_2}e^{i\omega_1}}\right) \in \left(- \pi , \pi \right).
\eee
We also note that 
\bee
(\omega,\gamma)^{-1}=(-\omega, -e^{i\omega} \gamma).  
\eee

In terms of the coordinates $(\omega,\gamma)$, the canonical projection 
\bee
\label{canproj}
\pi : \widetilde{\mathrm{SU}}(1,1) \rightarrow \mathrm{SU}(1,1)
\eee
is  simply the map
\bee
\label{coverproject}
\pi: (\omega, \gamma) \mapsto  M( \omega, \gamma),
\eee
where it is clear from \eqref{Momegamma} that the right hand side only depends on  $\omega$ mod $4\pi$.  This is a homomorphism whose kernel is the central subgroup generated by the element $(4\pi,0)$. It is, however, convenient to introduce a special name for rotations by $2\pi$. With 
\bee
\label{Omegadef} 
\Omega=(2\pi,0) \in \widetilde{\mathrm{SU}}(1,1),
\eee
we have $\pi(\Omega) = -\text{id}$ and 
\bee
\label{Omegaker}
\text{ker} \;\pi =\{\Omega^{2n} | n\in \ZZ\}.
\eee

When working with the Lie algebra  of $\widetilde{\mathrm{SU}}(1,1) $ we will  continue to use the notation  $s^a$, $a=0,1,2$, for the generators, i.e., we identify the  Lie algebra of $\widetilde{\mathrm{SU}}(1,1) $  with its  image  under the differential of the projection \eqref{canproj}.  However,  the exponential of  these generators  $\widetilde{\mathrm{SU}}(1,1) $ cannot be computed via matrix exponentials; instead one has to use the differential geometric definition in terms of the geodesic flow with respect to the bi-invariant (Lorentzian) metric on $\widetilde{\mathrm{SU}}(1,1) $.  We will need to exponentiate timelike, lightlike and spacelike generators at various points  in this paper, and therefore note the relevant  expressions here.  Since the geometric definition coincides with the matrix exponential for $\mathrm{SU}(1,1) $, the  results are essentially analytic continuations of the expressions  for $\mathrm{SU}(1,1) $ in the variable $\omega$. With the notation
\bee
\label{exponetialtilde}
\exp: \mathfrak{su}(1,1) \rightarrow \mathrm{SU}(1,1), \qquad   \widetilde{\exp}: \mathfrak{su}(1,1) \rightarrow \widetilde{\mathrm{SU}}(1,1), 
 \eee
for the exponential maps, we note  exponentials of typical timelike, spacelike and  lightlike elements in the Lie algebra. They give rise to, respectively, elliptic, hyperbolic and parabolic elements in the group, as exhibited  in \eqref{elliptic}, \eqref{parabolic} and \eqref{hyperbolic}. 
\begin{align}
\label{expcover}
\exp(-\alpha s^0) &= \begin{pmatrix} e^{i\frac{\alpha}{2}}  & 0  \\
0 & e^{-i\frac{\alpha}{2}} 
\end{pmatrix}, &  \widetilde{\exp}(-\alpha s^0) &= (\alpha,0),
\\
 \exp(-2\xi s^1 ) &=  \begin{pmatrix} \cosh \xi & \sinh\xi  \\
\sinh\xi &  \cosh \xi
\end{pmatrix},
  & 
 \widetilde{\exp}(-2\xi  s^1 )&= \left(0,\tanh \xi \right),
  \nonumber \\ 
\exp (2\zeta(s^0-s^1))&= \begin{pmatrix} 1-i\zeta & \zeta  \\
\zeta &  1+i\zeta
\end{pmatrix}, &
 \widetilde{\exp}(2\zeta(s^0-s^1)) &= \left(-2\tan^{-1}(\zeta) , \frac{\zeta }{1-i \zeta}\right).\nonumber 
\end{align}
Bearing in mind the comment made after \eqref{poincom}, we note in particular that  the adjoint action of  $(\alpha,0)$  on momentum space induces a mathematically positive rotation.

The unitary irreducible representations  (UIRs) of $\widetilde{\mathrm{SU}}(1,1)$ have been classified into a number of infinite families and are given in detail in  \cite{Sally} and also \cite{Grigore}. The particular class that is used to model anyonic particles  in \cite{JackiwNair,Plyushchay1,Plyushchay2}  is  called the discrete series.  We briefly review this here, using \cite{Sally} as a main reference  but adapting  notation used in \cite{Grigore}.  In particular, we label the UIRs in the discrete series  by $l \in \RR^+$ and a sign. 

The carrier spaces for the discrete series are given by the space of suitably  completed holomorphic ($+$) or anti-holomorphic ($-$) functions on the open unit disk. These spaces will be denoted $\mathcal{H}_{l\pm}$ or simply $\mathcal{H}$ when no confusion can arise. The completion is with respect to the appropriate Hilbert space inner product given below. For the family ($l+$),  the inner product between two holomorphic functions $f=\sum_{n=0}\alpha_nz^n$ and $g=\sum_{n=0}\beta_nz^n$  is 
\bee
\label{innerprodl}
\left( f,g \right)_l= \sum_{n=0}^{\infty}\frac{\Gamma(2l)\Gamma(n+1)}{\Gamma(2l+n)}\alpha_n\bar{\beta}_n, 
\eee
where $\Gamma$ is the Gamma function. The inner product above may be given an integral expression for the case $l>\frac{1}{2}$ as
\bee
\left( f,g \right)_{l\,>\frac{1}{2}}= \frac{2l-1}{\pi}\int_D (1-\abs{z}^2)^{2(l-1)}f(z)\bar{g(z)}\;\frac i 2 \dd z \wedge \dd \bar{z}.
\eee
For the representations labelled by $l-$ the inner products is as above except that the functions $f$ and $g$ are anti-holomorphic and therefore given as power series of the form $f=\sum_{n=0}\alpha_n\bar{z}^n$ and $g=\sum_{n=0}\beta_n\bar{z}^n$.

In order to express the action of $\widetilde{\mathrm{SU}}(1,1)$ on $\mathcal{H}$ for these representations we recall the projection map \eqref{coverproject} and \eqref{Momegamma}, and combine it 
with  a right action of $\mathrm{SU}(1,1)$ on the open unit disk:
\begin{equation}
z \cdot \bpm a & \bar{b} \\ b & \bar{a} \epm = \frac{az+b}{\bar{b}z+\bar{a}} = e^{i\omega} \frac{z+\gamma}{\bar{\gamma}z+1}.
\end{equation}

The action of $\widetilde{\mathrm{SU}}(1,1)$ on the carrier space $\mathcal{H}_{l+}$ is 
\bee
\label{PlusAnyonRep}
(D_{l+}(\omega, \gamma)f)(z)=e^{il\omega}(1-\abs{\gamma}^2)^{l}(1+\bar{\gamma}z)^{-2l}f(z\cdot \pi (\omega, \gamma)). 
\eee
On $\mathcal{H}_{l-}$  it is 
\bee
\label{MinusAnyonRep}
(D_{l-}(\omega, \gamma)f)(\bar{z})=e^{-il\omega}(1-\abs{\gamma}^2)^{l}(1+\gamma \bar{z})^{-2l}f(\overline{z\cdot \pi (\omega, \gamma)}).
\eee

The above action simplifies in the case $l \in \frac{1}{2}\NN$, where one recovers a genuine representation of the group  $\mathrm{SU}(1,1)$. In this case, one can use \eqref{abangles} to write 
\begin{equation}
(D_{l+}(\omega, \gamma)f)(z)=(\bar{a}+\bar{b}z)^{-2l}f(z\cdot \pi (\omega, \gamma)),
\end{equation}
 and 
 \begin{equation}
(D_{l-}(\omega, \gamma)f)(\bar{z})=(a+b\bar{z})^{-2l}f(\overline{z\cdot \pi (\omega, \gamma)}).
\end{equation}

A canonical choice of basis in $\mathcal{H}_{l +}$ is given by the orthonormal functions
\begin{equation}
e^+_{n,l}(z)=\left( \frac{\Gamma(2l+n)}{\Gamma(2l)\Gamma(n+1)} \right)^{\frac{1}{2}} z^n,\;\;n \in \NN,
\end{equation}
and for $\mathcal{H}_{l -}$ we have the basis
\begin{equation}
e^-_{n,l}(\bar{z})=\left( \frac{\Gamma(2l+n)}{\Gamma(2l)\Gamma(n+1)} \right)^{\frac{1}{2}} \bar{z}^n,\;\;n \in \NN.
\end{equation}
The state for $n=0$ in both $(l,+)$ and $(l,-)$ is the constant function. It plays an important role in constructing a covariant descriptions of  UIRs, so we introduce the notation
\bee
\label{groundstate}
\Ket {0}_l= e^+_{0,l}= e^-_{0,l}
\eee
for the map $z\mapsto 1$.

In these representations the infinitesimal generators of the Lie algebra, denoted $d_{l\pm}(s^{a})$, can be realised as  differential operators acting on the carrier space $\mathcal{H}$.  Starting with the usual definition
\begin{equation}
\label{dDrel}
d_{l\pm}(s^{a})= \left. \frac{\dd}{\dd \epsilon} \right \rvert_{\epsilon=0} D_{l\pm}(\widetilde{\exp}{\epsilon s^a}),
\end{equation}
we compute  for the positive discrete series 
\begin{align}
\label{InfAnyonRep+}
d_{l+}(s^{0})&=-il-iz\frac{\dd}{\dd z}, \nonumber \\
d_{l+}(s^{1})&= lz-\frac{1}{2}(1-z^2)\frac{\dd}{\dd z}, \nonumber  \\
d_{l+}(s^{2})&= ilz+\frac{i}{2}(1+z^2)\frac{\dd}{\dd z} ,
\end{align}
and for the negative series 
\begin{align}
\label{InfAnyonRep-}
d_{l-}(s^{0})&=il +i\bar{z}\frac{\dd}{\dd \bar{z}},  \nonumber \\
d_{l-}(s^{1})&=l\bar{z}-\frac{1}{2}(1-\bar{z}^2)\frac{\dd}{\dd\bar{z}},  \nonumber\\
d_{l-}(s^{2})&=-il\bar{z}-\frac{i}{2}(1+\bar{z}^2)\frac{\dd}{\dd\bar{z}}.
\end{align}
For later use, we note that the linear combination  $s^{a}p_a$ for an arbitrary vector $\bp$ acts according to 
\begin{equation}
\label{dl+}
d_{l+}(s^{a})p_a=-il p_0 +i(p_2-ip_1)lz  +i\left(\frac{1}{2}(p_2+ip_1)- p_0z+\frac{1}{2}(p_2-ip_1)z^2\right)\frac{\dd}{\dd z},
\end{equation}
and
\begin{equation}
\label{dl-}
d_{l-}(s^{a})p_a=il p_0 -i(p_2+ip_1)l\bar{z} -i\left(\frac{1}{2}(p_2-ip_1)- p_0\bar{z}+\frac{1}{2}(p_2+ip_1)\bar{z}^2\right)\frac{\dd}{\dd\bar{z}}.
\end{equation}

The vector fields appearing in the action of the $s^{a}p_a$ have a natural geometrical interpretation which is  familiar in the context of the mini-twistor  correspondence between points in Euclidean 3-space and the set of all the lines through that point.  We explain briefly how this point of view fits into our Lorentzian setting. 

We can parametrise the set of all timelike lines in 2+1 dimensional Minkowski space in terms of a timelike vector $\bq$,  normalised so that $\bq^2=1$ and giving the direction of the line, and a vector $\bk$ which lies on the line and which can  be chosen to satisfy $\bk\cdot \bq=0$ without loss of generality. Geometrically, $\bq$ lies on the  two-sheeted  hyperboloid, and $\bk$ lies in the tangent space at $\bq$.

 In terms of an orthonormal basis $\be^{0},\be^{1},\be^{2}$ of Minkowski space,  and the complex linear combination $\be=\be^{1}+i\be^{2}$,  we can parametrise $\bq$ in terms of a complex variable $z$ in the unit disk via
\[
\bq=\frac{1}{1-|z|^2}\left( (1+|z|^2)\be^{0} +z\bar{\be} + \bar{z} \be\right),
\]
and the tangent vector  $\bk$ in terms of a complex number $w$ via
\[
\bk = w \frac{\partial  \bq}{\partial  z} + \bar{w} \frac{\partial  \bq}{\partial  \bar z} = \frac{1}{(1-|z|^2)^2} \left( 2(w \bar z + \bar w z) \be^{0} + (\bar w + w \bar z^2)\be +  (w + \bar w z^2)\bar \be\right).
\]
Then one checks that the point $\bp=p_0\be^{0} + p_2\be^{1} +p_1\be^{2}$ lies on the line through $\bk$ and in the direction $\bq$ if and only if
\bee
w= \frac 12 (p_2+ip_1) - p_0z + \frac 12 (p_2-ip_1) z^2.
\eee
In other words,  the derivatives appearing in \eqref{dl+} and \eqref{dl-} are precisely the holomorphic and anti-holomorphic part of the tangent vector which characterises a line containing $\bp$ and in the direction $\bq$ determined by $z$.

\subsection{Massive $\tilde{P}_3$ representations}
\label{P3 Reps}
There are various ways of getting from the UIRs of the Poincar\'e group to the covariant wave equations of relativistic physics.
In  \cite{SemiDual,SchroersWilhelm}, a procedure was developed which is also effective when the Poincar\'e group is deformed to the quantum double of the Lorentz group or one of its covers. We briefly review the method here in a convenient form for extension to the anyonic case. However, relative to \cite{SchroersWilhelm} we change the sign convention for mass and spin to agree with the one used in \cite{BaisScatt}.

Using the isomorphism $\tilde{P}_3 \simeq \mathrm{SU}(1,1) \ltimes \mathfrak{su}(1,1)^*$ in \eqref{P3Iso},  UIRs of $\tilde{P}_3$ may be classified by the adjoint orbits of $\mathrm{SU(1,1)}$ in momentum space $\mathfrak{su}(1,1)$  together with an UIR of associated stabiliser groups.  For massive particles, the former encodes the mass, and the latter its spin. In the case of a particle of mass $m\neq 0$, we denote  the adjoint orbit  by $ O^T_m$. It elements are obtained by boosting the representative momentum $-\lambda ms^0$ (which  generates a  mathematically positive rotation)  to obtain the typical element 
\bee
\label{signconventions}
p=   -v\lambda ms^0v^{-1}=-\lambda \bp\cdot\bs.
\eee
The coordinate vector $\bp$ satisfies $\bp^2 =m^2$ and  the sign constraint $mp_0>0$ so that $\bp$  lies in the  $p_0>0$ half space for positive $m$,  and in the  $p_0<0$ half space for negative  $m$. Thus we have two equivalent characterisations of the adjoint orbit corresponding to massive particles:
\bee
O^T_m = \Set{ \Ad_v(-\lambda ms^0) | v\in \mathrm{SU}(1,1)}  = \Set{- \lambda p_as^a \in \mathfrak{su}(1,1) | \bp^2=m^2, mp_0>0 }.
\eee
 Geometrically, this is  the  upper  ($m>0$) or lower ($m<0$) sheet of the two-sheeted hyperboloid. Other types of adjoint orbits include  the trivial orbit $\{ 0\}$ describing the vacuum, the   forward or backward lightcone describing massless particles and  single-sheeted hyperboloids  describing  tachyons, but  we will not consider these here.

The associated stabiliser group $N^T$ of $O^T_m$ is given by
\bee
N^T=\Set{ v \in \mathrm{SU}(1,1)  | \Ad_v(s^0)= s^0)} = \Set{\exp (-\phi s^0) | \phi \in [0,4\pi) } \nonumber \\
\simeq U(1).
\eee
The UIRs of $U(1)$ are  one-dimensional and labelled by $s\in \frac{1}{2}\mathbb{Z}$  in our conventions.

The carrier space of the UIRs   of $\tilde{P}_3$ for massive particles with spin $s$  can be given in  two equivalent ways. Either one considers functions on $\mathrm{SU}(1,1)$ which satisfy an equivariance condition or sections of associated vector bundles over the homogeneous space $\mathrm{SU}(1,1)/N^T \simeq O^T_m$. We focus on the former method here but, refer the reader to \cite{BaisScatt} for a discussion of their equivalence in the context of 3d gravity and to \cite{Barut} for a general reference. Adopting the conventions of \cite{BaisScatt}, we   define the carrier space as
\begin{align}
\label{P3 Carrier Sp}
V_{ms}= \left \lbrace \psi \colon \mathrm{SU}(1,1) \rightarrow \CC  \vert \psi \left(ve^{-\alpha s^0}\right)=e^{-is\alpha}\psi(v) \; \forall \,\alpha \in  [0,4\pi), v \in  \mathrm{SU(1,1)}, \right. \nonumber  \\
\left. \int_{\mathrm{SU}(1,1)/N^T}\abs{\psi}^2 d  \nu < \infty  \right\rbrace,
\end{align}
where $d \nu$ is the invariant measure on the coset $\mathrm{SU}(1,1)/N^T$.
The action of an element $(g,a)\in \tilde{P}_3$  on $\psi \in V_{ms}$ is 
\begin{equation}
\label{pi ms rep}
(\pie_{ms}(g,a)\psi )(v)=\exp\left(i\langle a,\Ad_{g^{-1}v}(-\lambda ms^0)\rangle \right)\psi(g^{-1}v),
\end{equation}
where, in accordance with \eqref{P3Iso}, $a$ is interpreted as an element of  $\mathfrak{su}(1,1)^*$, and   $\langle \cdot , \cdot\rangle$  is the   pairing between elements of $\mathfrak{su}(1,1)^*$ and $\mathfrak{su}(1,1)$ introduced and discussed in Sect.~\ref{poincareconventions}. We have attached the superscript `eq' to distinguish this  equivariant formulation from the later covariant version. 

\subsection{Covariant field representations}
In field theory we do not usually work with the space of equivariant functions as just described. Instead we use covariant fields 
\begin{equation}
\phi \colon \RR^{2,1} \rightarrow V,
\end{equation}
where $V$ is a carrier space for a (usually finite dimensional) representation of the Lorentz group. In general such fields do not form irreducible representations of the Poincar\'e group and, as a result, additional constraints need to be imposed to achieve this. For fields defined on momentum space these constraints are  algebraic,  but after Fourier transform they  yield the familiar wave equations for a field of  definite spin.

Following  \cite{SchroersWilhelm} for the method, but changing the sign convention to agree with \cite{BaisScatt},  we construct a covariant field
\begin{align}
\tilde{\phi} \colon O^T_m \rightarrow \CC^{2\abs{s}+1} 
\end{align} 
from a given $\psi \in V_{ms}$ via 
\begin{equation}
\tilde{\phi}(p)=\psi(v) \rho ^{\abs{s}}(v)\Ket {\abs{s},-s}, \qquad \text{with} \quad 
p=-\lambda m vs^0v^{-1}.
\end{equation}
Here $\{ \Ket{\abs{s},k}: k = -|s|,-|s|+1, \ldots |s|-1, |s|\}$  is a  basis of the (non-unitary) $(2|s|+1)$-dimensional  representation of $\mathfrak{su}(1,1)$, satisfying 
\bee
\rho^{|s|}(s^0)  \Ket{\abs{s},k}=ik  \Ket{\abs{s},k},
\eee
see \cite{SchroersWilhelm} for details.  

To check that $\tilde{\phi}(p)$ is well-defined, one  needs to show that 
\begin{equation}
\psi(v) \rho ^{\abs{s}}(v)\Ket {\abs{s},-s}=\psi(vn) \rho ^{\abs{s}}(vn)\Ket {\abs{s},-s}\;\forall n \in N^T,
\end{equation}
but this is true because the phase picked up by $\psi$ under the action of an element $n$ of the stabiliser is precisely cancelled by the action of $\rho ^{\abs{s}}(n)$ on the state $\Ket {\abs{s},-s}$. This construction works for massive particles since  $\rho ^{\abs{s}}(s^0)$ has imaginary eigenvalues. However,  this is not the case for the momentum representatives on massless and tachyonic orbits and hence the above procedure is limited to  particles with timelike  momentum. 

Adapting the results in \cite{SchroersWilhelm} to our conventions, the covariant field $\tilde{\phi}$ necessarily satisfies the condition
\begin{equation}
\label{SpinConstraint}
\left(i\rho^{\abs{s}}(s^a)p_a-ms\right)\tilde{\phi}(p)=0,
\end{equation}
which we call the spin constraint.  In order to carry out the envisaged Fourier transform, we would like to  extend  $\tilde\phi$   to a function on all of  the linear momentum space $\mathfrak{su}(1,1)$. However, we then need to impose the  mass constraint $\bp^2=m^2$  and the sign constraint  $mp_0 >0$
to ensure that  $\tilde \phi$ has support on the orbit $ O^T_m$. The sign constraint makes sense  when the mass constraint is enforced since $m\neq 0$ by assumption and  therefore $(p_0)^2 >0$.  To  restrict the support of $\tilde \phi$  to the `forward' mass shell when $m>0$ and to the `backward' mass shell when $m<0$, we use the Heaviside function $\Theta$ and 
 define the carrier space
\begin{equation}
\label{covcarrier}
W_{ms}=\Set{ \tilde{\phi} \colon \mathfrak{su}(1,1) \rightarrow \CC^{2\abs{s}+1} | (i\rho^{\abs{s}}(s^a)p_a-ms)\tilde{\phi}(p)=0, (\bp^2-m^2)\tilde{\phi}(p)=0, \Theta( -mp_0)\tilde \phi=0}.
\end{equation}
In the corresponding definition in  \cite{SchroersWilhelm}, the sign constraint was not  included, but for us this  inclusion is convenient because we directly obtain a UIR of  $\tilde{P}_3$ without adding further conditions. 
The action  of  $(g,a)\in \tilde{P}_3$  on this space is  given by 
\begin{equation}
(\pic(g,a) \tilde{\phi})(p)=\exp(i\langle a,\Ad_{g^{-1}}p\rangle)\rho ^{\abs{s}}(g)\tilde{\phi}(\Ad_{g^{-1}}p),
\end{equation}
which we call the covariant formulation.

In \cite{SchroersWilhelm} it is also shown that the above covariant fields produce UIRs of $\tilde{P}_3$ for the familiar cases of spin $s=0,\frac{1}{2}, 1$ and that the mass constraint for spin zero and the spin constraints for $s=\frac{1}{2}$ and $s=1$ produce the momentum space versions of the Klein-Gordon equation, Dirac equation  and of  field equations which square to the  Proca equation\footnote{The spin 1 equation was simply called Proca equation  in \cite{SchroersWilhelm} but  it is  more precisely  a first order equation  which implies the Proca equation.  Its relation to self-dual massive field theory is discussed in \cite{Gitman}}.

We refer the reader to \cite{SchroersWilhelm} for details of the Fourier transform of the spin constraints to relativistic field equations in spacetime. We now turn to the anyonic case, where we will discuss both the covariant formulation of the UIRs and the Fourier transform.

\section{Anyonic wave equations}
Anyons are  quantum particles with fractional spin which occur in systems confined to two spatial dimensions. In the relativistic case, the  theoretical possibility of anyonic particles is a consequence of the infinite connectedness of the Lorentz group $\Lor$.

To describe relativistic anyons we need to consider the representation theory of the universal cover of the Poincar\'{e} group $P^\infty_3$. The UIRs are classified in \cite{Grigore} using the method of induced representations. The action  of $ (\omega, \gamma)\in \widetilde{\mathrm{SU}}(1,1)$ on momentum space is 
 the adjoint action $\Ad_{(\omega, \gamma)}$.
The stabiliser  group for a massive particle, with standard momentum $-\lambda ms^0$, is therefore
\bee
\tilde{N}^T = \Set{(\omega, \gamma) \in \widetilde{\mathrm{SU}}(1,1) | \Ad_{(\omega, \gamma)}s^0=s^0 } = \left\lbrace (\omega, 0) \in \widetilde{\mathrm{SU}}(1,1) \right\rbrace  \simeq \RR.
\eee
The one-dimensional UIRs of the stabiliser are labelled by $s \in \RR$ which  represents the spin of the massive particle.  With our results \eqref{expcover} for the exponential map into $\widetilde{\mathrm{SU}}(1,1)$ and using  \eqref{Momegamma}, we note that 
\bee
(\omega,\gamma)\widetilde{\exp}(-\alpha s^0)= (\omega+\alpha, \gamma).
\eee
Thus, we have  the following equivariant description of the carrier space:
\begin{align}
\label{AnyonicEquivariance}
V_{ms}^A= \left\lbrace \psi \colon \widetilde{\mathrm{SU}}(1,1) \rightarrow \CC \vert \psi (\omega+\alpha, \gamma)=e^{-is\alpha} \psi(\omega,\gamma)\; \forall \alpha \in  \RR, \forall (\omega,\gamma) \in \widetilde{\mathrm{SU}}(1,1),
\right.  \nonumber  \\
\left. \int_{\widetilde{\mathrm{SU}}(1,1)/\tilde{N}^T}\abs{\psi}^2 d\nu < \infty  \right\rbrace.
\end{align}
 The action of $((\omega,\gamma), a)\in P
 ^\infty_3$ on the space $V_{ms}^A$ is 
\begin{equation}
\label{AnyonicUIR}
\left(\pie_{ms}((\omega, \gamma),a)\psi \right)(v)=\exp\left(i\langle a,\Ad_{((\omega, \gamma)^{-1}v)}(-\lambda ms^0)\rangle \right)\psi \left((\omega, \gamma)^{-1}v\right).
\end{equation}
We now follow the procedure of the previous section to construct  anyonic covariant fields.

\begin{definition}[Anyonic Covariant Field] The anyonic covariant field $\tilde{\phi}_{\pm}$ associated to an equivariant field $\psi\in V^A_{ms}$ is the   map
\begin{align}
\tilde{\phi}_{\pm} \colon O^T_m \rightarrow \mathcal{H}_{l \pm},
\end{align} 
where $\mathcal{H}_{l \pm}$ is the carrier space for the discrete series representations of $\widetilde{\mathrm{SU}}(1,1)$ given in \eqref{SL2 Coverings},
defined via
\begin{equation}
\label{AnyonicCovField}
\tilde{\phi}_{\pm}(p)=\psi(\omega, \gamma) D_{l\pm}(\omega, \gamma)\Ket {0}_l.
\end{equation}
Here  $(\omega,\gamma)\in\widetilde{\mathrm{SU}}(1,1)$ is chosen so that  $p=-\lambda m \,\Ad_{(\omega, \gamma)}(s^0)$ and 
$s=l$ for $\tilde{\phi}_+$ and $s=-l$ for $\tilde{\phi}_-$. 
As before, $D_{l\pm}$ are the discrete series representations of $\widetilde{\mathrm{SU}}(1,1)$ and $\Ket {0}_l$ is defined in \eqref{groundstate}.
\end{definition}
Thus we should use $\tilde{\phi}_+$ to describe positive spin particles and $\tilde{\phi}_-$ for negative spin.
\begin{lemma} The anyonic covariant fields \eqref{AnyonicCovField} are well defined.
  \end{lemma}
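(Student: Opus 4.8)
The plan is to show that the right-hand side of \eqref{AnyonicCovField} is independent of the choice of representative $(\omega,\gamma)$, since the equation $p=-\lambda m\,\Ad_{(\omega,\gamma)}(s^0)$ does not fix $(\omega,\gamma)$ uniquely. I would first pin down precisely the ambiguity, and then verify that the product $\psi(\omega,\gamma)D_{l\pm}(\omega,\gamma)\Ket{0}_l$ is invariant under it, by exhibiting the ground state $\Ket{0}_l$ as an eigenvector of the stabiliser whose eigenphase cancels the equivariance phase of $\psi$. This is the exact analogue of the cancellation noted after the finite-dimensional construction in Sect.~\ref{P3 Reps}.

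First I would identify the ambiguity. The fibre of the orbit map $v\mapsto -\lambda m\,\Ad_v(s^0)$ through a given $p\in O^T_m$ is a right coset $v\tilde{N}^T$ of the stabiliser $\tilde{N}^T=\{(\alpha,0)\mid\alpha\in\RR\}$. Hence any two representatives differ by right multiplication by an element $(\alpha,0)$, and using $(\omega,\gamma)(\alpha,0)=(\omega+\alpha,\gamma)$ the full set of representatives is $\{(\omega+\alpha,\gamma)\mid\alpha\in\RR\}$. It therefore suffices to show that $\tilde{\phi}_\pm$ is unchanged under $(\omega,\gamma)\mapsto(\omega+\alpha,\gamma)$ for every $\alpha\in\RR$.

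Next I would carry out the cancellation. Since $D_{l\pm}$ is a representation, $D_{l\pm}(\omega+\alpha,\gamma)=D_{l\pm}(\omega,\gamma)\,D_{l\pm}(\alpha,0)$, so everything reduces to the action of $D_{l\pm}(\alpha,0)$ on $\Ket{0}_l$. Using $\pi(\alpha,0)=M(\alpha,0)=\mathrm{diag}(e^{i\alpha/2},e^{-i\alpha/2})$ together with the disk action $z\cdot\pi(\alpha,0)=e^{i\alpha}z$, the formulae \eqref{PlusAnyonRep} and \eqref{MinusAnyonRep} evaluated on the constant function $\Ket{0}_l$ give $D_{l\pm}(\alpha,0)\Ket{0}_l=e^{\pm il\alpha}\Ket{0}_l$. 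On the other hand the equivariance in \eqref{AnyonicEquivariance} with $s=\pm l$ gives $\psi(\omega+\alpha,\gamma)=e^{-is\alpha}\psi(\omega,\gamma)=e^{\mp il\alpha}\psi(\omega,\gamma)$. Multiplying the two factors, the phases $e^{\mp il\alpha}$ and $e^{\pm il\alpha}$ cancel, so $\psi(\omega+\alpha,\gamma)D_{l\pm}(\omega+\alpha,\gamma)\Ket{0}_l=\psi(\omega,\gamma)D_{l\pm}(\omega,\gamma)\Ket{0}_l$, establishing well-definedness.

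I do not expect any serious obstacle here: the argument is a direct analogue of the finite-dimensional case and reduces to a one-line phase computation. The only points requiring care are bookkeeping ones — confirming that the representative ambiguity is exactly right multiplication by $\tilde{N}^T$ (so that $\gamma$ is common to all representatives and only $\omega$ shifts), and matching the sign conventions $s=l$ for $\tilde{\phi}_+$ and $s=-l$ for $\tilde{\phi}_-$ so that the eigenphase of $\Ket{0}_l$ under the stabiliser is precisely the inverse of the equivariance phase of $\psi$.
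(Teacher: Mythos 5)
Your proposal is correct and follows essentially the same route as the paper's own proof: reduce the ambiguity in the choice of $(\omega,\gamma)$ to right multiplication by the stabiliser $\tilde{N}^T$, use the representation property $D_{l\pm}((\omega,\gamma)(\alpha,0))=D_{l\pm}(\omega,\gamma)D_{l\pm}(\alpha,0)$, and cancel the eigenphase $e^{\pm il\alpha}$ of $\Ket{0}_l$ against the equivariance phase $e^{-is\alpha}$ of $\psi$ under $s=\pm l$. The only cosmetic difference is that the paper runs the computation with $s$ left general and reads off the condition $l\mp s=0$ at the end, whereas you substitute $s=\pm l$ from the outset; the content is identical.
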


\begin{proof} 
One needs to check that this definition is independent of the choice of $(\omega,\gamma)$, i.e., we require
\begin{equation}
\psi(\omega,\gamma) D_{l\pm}(\omega,\gamma)\Ket {0}_l=\psi((\omega,\gamma)(\alpha,0)) D_{l\pm}\left((\omega,\gamma)(\alpha,0)\right)\Ket {0}_l \qquad \text{for all} \quad \alpha \in \RR.
\end{equation}
Expanding the right hand side and using the equivariance of $\psi$, and the action of the stabiliser subgroup elements in the  representations $D_{l\pm}$  on the vacuum  state one obtains
\begin{align}
\label{invariance}
\psi((\omega,\gamma) (\alpha,0)) D_{l\pm}\left((\omega,\gamma)(\alpha,0)\right)\Ket {0}_l &=\psi((\omega+\alpha,\gamma)  D_{l\pm}\left((\omega,\gamma)(\alpha,0)\right)\Ket {0}_l \nonumber \\
&=e^{-is\alpha} \psi(\omega,\gamma)D_{l+}(\omega,\gamma)D_{l+}(\alpha, 0))\Ket {0}_l\nonumber \\
&=e^{-is\alpha} \psi(\omega,\gamma)D_{l+}(\omega,\gamma)e^{il\alpha}\Ket {0}_l \nonumber \\
&=e^{i(l-s)\alpha}\psi(\omega,\gamma)D_{l+}(\omega,\gamma)\Ket {0}_l .
\end{align}
Hence we obtain invariance if $l-s=0$, as claimed. An analogous argument applied to $\tilde{\phi}_-$  shows that $l+s=0$ is required in that case.
\end{proof}

The anyonic covariant field carries a unitary representation of the Poincar\'e group, which we again denote $\pic$:
\begin{equation}
\label{AnyonicCovAction}
(\pic_{ms}((\omega, \gamma), a) \tilde{\phi}_{\pm})(p)=\exp(i\langle a,\Ad_{(\omega, \gamma))^{-1}}p\rangle)D_{l\pm}((\omega, \gamma))\tilde{\phi}_{\pm}(\Ad_{(\omega, \gamma)^{-1}}p).
\end{equation}
Without further condition, this representation is not irreducible. To achieve irreducibility, we need 
 anyonic versions of  the  spin, mass and sign constraints. Our version of the spin constraint is  the equation considered by Plyushchay in \cite{Plyushchay1,Plyushchay2}. As pointed out in those papers, it is also essentially   a dimensional reduction of the Majorana equation \cite{Majorana}.

\begin{lemma}[Anyonic Spin Constraint]
The anyonic covariant fields $\tilde \phi_{\pm}$ satisfy the following constraints
\bee
\label{AnyonSpinCon}
(d_{l\pm}(p)-i\lambda ms) \tilde{\phi}_{\pm}(p)=0,
\eee
where $d_{l\pm}$ is the Lie algebra representation associated to the discrete series representation $D_{l\pm}$ via \eqref{dDrel}, with  $s=l$  for the positive series  and $s=-l$ for the negative series. 
\end{lemma}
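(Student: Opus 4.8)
The plan is to exploit the covariance of the derived representation $d_{l\pm}$ under the adjoint action, together with the fact that the ground state $\Ket{0}_l$ is a weight vector for the compact generator $s^0$. The underlying idea is that every momentum $p$ on the orbit $O^T_m$ is the image of the standard momentum $-\lambda m s^0$ under an adjoint transformation, and the covariant field $\tilde{\phi}_{\pm}$ is built from precisely the matching group element $D_{l\pm}(\omega,\gamma)$, so the intertwining property lets the group element and its inverse cancel, reducing the computation to the action of $d_{l\pm}(s^0)$ on the constant function.

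First I would record the eigenvalue that makes this work. Reading off the explicit generators \eqref{InfAnyonRep+} and \eqref{InfAnyonRep-} and evaluating on the constant function $\Ket{0}_l$, so that all $z$- and $\bar z$-derivatives annihilate it, one finds
\[
d_{l+}(s^0)\Ket{0}_l = -il\,\Ket{0}_l, \qquad d_{l-}(s^0)\Ket{0}_l = +il\,\Ket{0}_l,
\]
i.e.\ $d_{l\pm}(s^0)\Ket{0}_l = \mp il\,\Ket{0}_l$. This is the same feature of $\Ket{0}_l$ that rendered the covariant field well defined above. Next I would establish the intertwining identity
\[
d_{l\pm}(\Ad_{(\omega,\gamma)} X) = D_{l\pm}(\omega,\gamma)\, d_{l\pm}(X)\, D_{l\pm}(\omega,\gamma)^{-1},
\]
valid for every $X\in\mathfrak{su}(1,1)$. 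This follows by differentiating the homomorphism property of $D_{l\pm}$ applied to $(\omega,\gamma)\,\widetilde{\exp}(tX)\,(\omega,\gamma)^{-1} = \widetilde{\exp}(t\,\Ad_{(\omega,\gamma)}X)$ at $t=0$; since $\Ket{0}_l$ is an analytic vector for the discrete series, the differentiation is legitimate on the relevant domain.

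The computation then assembles these two ingredients. Recalling from \eqref{pexpand} that $d_{l\pm}(p)$ means the derived representation applied to the Lie-algebra element $p = -\lambda p_a s^a = \Ad_{(\omega,\gamma)}(-\lambda m s^0)$, and that $\psi(\omega,\gamma)$ is a scalar, I would write
\begin{align*}
d_{l\pm}(p)\,\tilde{\phi}_{\pm}(p) &= -\lambda m\, \psi(\omega,\gamma)\, d_{l\pm}\!\left(\Ad_{(\omega,\gamma)} s^0\right)\, D_{l\pm}(\omega,\gamma)\Ket{0}_l \\
&= -\lambda m\, \psi(\omega,\gamma)\, D_{l\pm}(\omega,\gamma)\, d_{l\pm}(s^0)\, \Ket{0}_l \\
&= \pm\, i\lambda m l\, \tilde{\phi}_{\pm}(p),
\end{align*}
where the second line uses the intertwiner to cancel $D_{l\pm}(\omega,\gamma)^{-1}$ against the $D_{l\pm}(\omega,\gamma)$ inside $\tilde{\phi}_{\pm}$, and the third substitutes the eigenvalue $\mp il$. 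Identifying $s=l$ for the positive series and $s=-l$ for the negative series turns the right-hand side into $i\lambda m s\,\tilde{\phi}_{\pm}(p)$, which is exactly the asserted constraint.

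I expect the only genuinely delicate point to be the bookkeeping of signs and the factor $\lambda$: the $-\lambda m$ from the orbit parametrisation must combine with the eigenvalue $\mp il$ of $d_{l\pm}(s^0)$ to reproduce $i\lambda m s$ with $s=\pm l$, and one must be consistent with the convention \eqref{pexpand} so that $d_{l\pm}(p)$ is $-\lambda p_a\, d_{l\pm}(s^a)$ rather than the bare contraction $d_{l\pm}(s^a)p_a$. The functional-analytic justification for differentiating the infinite-dimensional representation on $\Ket{0}_l$ is the remaining subtlety, but it is mild since $\Ket{0}_l$ is analytic and the generators act on it through polynomials.
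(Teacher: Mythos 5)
Your proposal is correct and follows essentially the same route as the paper's own proof: write $p=-\lambda m\,\Ad_{(\omega,\gamma)}s^0$, use the intertwining relation $d_{l\pm}(\Ad_{(\omega,\gamma)}s^0)=D_{l\pm}(\omega,\gamma)\,d_{l\pm}(s^0)\,D_{l\pm}((\omega,\gamma)^{-1})$ to cancel against the $D_{l\pm}(\omega,\gamma)$ inside $\tilde{\phi}_{\pm}$, and then apply the eigenvalue $d_{l\pm}(s^0)\Ket{0}_l=\mp il\,\Ket{0}_l$, with the same sign bookkeeping giving $i\lambda m s$ for $s=\pm l$. Your added remarks on deriving the intertwining identity from the homomorphism property and on $\Ket{0}_l$ being an analytic vector are justifications the paper leaves implicit, but they do not change the argument.
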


\begin{proof} With  
$p=-\lambda p_as^a=-\lambda m\Ad_{(\omega, \gamma)}s^0 $, we compute for  $\tilde{\phi}_+$
\begin{align}
d_{l+}(p) \tilde{\phi}_+(p)&=-\lambda md_{l+}(\Ad_{(\omega, \gamma)}s^0)\tilde{\phi}_+(p) \nonumber \\
&=-\lambda  m\psi(\omega, \gamma)D_{l+}(\omega, \gamma)d_{l+}(s^0)D_{l+}((\omega, \gamma)^{-1})D_{l+}(\omega, \gamma)\Ket {0}_l \nonumber \\
&= -\lambda m\psi(\omega, \gamma)D_{l+}(\omega, \gamma)d_{l+}(s^0)\Ket {0}_l \nonumber \\
&=i\lambda ml  \psi(\omega, \gamma)D_{l+}(\omega, \gamma)\Ket {0}_l \nonumber \\
&= i  \lambda ms\tilde{\phi}_+(p).
\end{align}
An analogous computation for $\tilde{\phi} _-$, noting that $s=-l$, completes the proof.
\end{proof}

Note that, in components and with the sign convention \eqref{signconventions}, the spin constraint  takes the form
\bee
\label{convenspin}
(i d_{l\pm}(s^a)p_a -  ms) \tilde{\phi}_{\pm}(p)=0,
\eee
which has the same form as the finite-component version \eqref{SpinConstraint}.

In order to  construct UIRs of $P^\infty_3$ in terms of covariant fields  we define
 the carrier space which generalises  \eqref{covcarrier} to infinite-component fields:
\begin{equation}
\label{WAms}
W_{ms}^A=\Set{ \tilde{\phi}_{\pm} \colon \mathfrak{su}(1,1) \rightarrow \mathcal{H}_{l\pm} |  \left( d_{l\pm}(p) - i\lambda ms\right)\tilde{\phi}_{\pm}(p)=0, (\bp^2-m^2)\tilde{\phi}_{\pm}(p)=0 },
\end{equation}
where we choose the upper sign if $s>0$ and the lower sign if $s<0$.  

Note that, unlike in the finite-dimensional case,  we do not also need to impose  the  constraint that $p^0$ and $m$ have the same sign. This follows from our conventions \eqref{signconventions} and  from the fact that $id_{l+}(s^0)$ has only positive eigenvalues and  $i d_{l-}(s^0)$ only negative eigenvalues. This property was one of the motivations for Majorana to construct his infinite-component fields in \cite{Majorana}. 

The  space $W_{ms}^A$ is a Hilbert space with the inner product
\bee
(\tilde{\phi}_1, \tilde{\phi}_2) = \int_{O^T_{m}}(\tilde \phi_1(p),\tilde\phi_2(p))_l\, d \nu,
\eee
where $d\nu$ is the invariant measure on the hyperboloid $O^T_m$, and $(\cdot,\cdot)_l$ is the inner product  \eqref{innerprodl}  on $\mathcal{H}_{l+}$, with an analogous expression for fields taking values in  $\mathcal{H}_{l-}$.

We now show that the representations $V^{A}_{ms}$ and $W^{A}_{ms}$ are isomorphic.  This implies that the covariant fields subject to the mass and  spin constraints form UIRs  of the universal cover  $P^\infty_3$ of the Poincar\'e group. 
  
\begin{theorem}[Irreducibility of the carrier space $W^{A}_{ms}$] 
\label{intertwining} The  covariant representation $\pic_{ms}$ of  $P^\infty_3$  on $W^{A}_{ms}$  defined in \eqref{AnyonicCovAction} is unitarily equivalent to the  equivariant  representation $\pie_{ms}$ on $V^{A}_{ms}$ defined in \eqref{AnyonicUIR}.  In particular, it is therefore irreducible.
\end{theorem}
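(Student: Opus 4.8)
The plan is to produce an explicit unitary intertwiner between the two representations and then to transport irreducibility from the equivariant side to the covariant side. The natural candidate is the covariant-field assignment itself: define $T\colon V_{ms}^A\to W_{ms}^A$ by $(T\psi)(p)=\psi(\omega,\gamma)D_{l\pm}(\omega,\gamma)\Ket{0}_l$, i.e.\ $T\psi=\tilde\phi_\pm$ as in \eqref{AnyonicCovField}, with $(\omega,\gamma)$ any lift satisfying $p=-\lambda m\,\Ad_{(\omega,\gamma)}s^0$. By the Lemma on well-definedness this is independent of the chosen lift, by the Anyonic Spin Constraint $T\psi$ satisfies \eqref{AnyonSpinCon}, and since $\tilde\phi_\pm$ lives on $O^T_m$ it also obeys the mass constraint; hence $T$ does map into $W_{ms}^A$. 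First I would verify that $T$ intertwines the actions, $T\circ\pie_{ms}((\omega_0,\gamma_0),a)=\pic_{ms}((\omega_0,\gamma_0),a)\circ T$, which is a direct unwinding of \eqref{AnyonicUIR} and \eqref{AnyonicCovAction} using that the point $p$ transforms to $\Ad_{(\omega_0,\gamma_0)^{-1}}p$ and that $D_{l\pm}$ is a homomorphism; I would not write out this routine check.

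The heart of the argument, and the step I expect to be the main obstacle, is surjectivity of $T$. Given an arbitrary $\tilde\phi_\pm\in W_{ms}^A$, I would evaluate the spin constraint at $p=-\lambda m\,\Ad_{(\omega,\gamma)}s^0$, use $d_{l\pm}(p)=-\lambda m\,D_{l\pm}(\omega,\gamma)\,d_{l\pm}(s^0)\,D_{l\pm}(\omega,\gamma)^{-1}$, and set $\chi=D_{l\pm}(\omega,\gamma)^{-1}\tilde\phi_\pm(p)$; the constraint \eqref{AnyonSpinCon} then collapses to the single eigenvalue equation $d_{l\pm}(s^0)\chi=-is\chi$. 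Reading the spectrum of $d_{l\pm}(s^0)$ off \eqref{InfAnyonRep+} and \eqref{InfAnyonRep-} --- namely $-i(l+n)$ on $z^n$ for the $+$ series and $i(l+n)$ on $\bar z^n$ for the $-$ series --- and substituting $s=\pm l$, one finds the eigenvalue $-is$ is attained only at $n=0$, so $\chi\propto\Ket{0}_l$ and therefore $\tilde\phi_\pm(p)=\psi(\omega,\gamma)D_{l\pm}(\omega,\gamma)\Ket{0}_l$ for a scalar $\psi$. The equivariance $\psi(\omega+\alpha,\gamma)=e^{-is\alpha}\psi(\omega,\gamma)$ then follows from $D_{l\pm}(\alpha,0)\Ket{0}_l=e^{\pm il\alpha}\Ket{0}_l$, giving $\psi\in V_{ms}^A$ with $T\psi=\tilde\phi_\pm$. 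This is exactly where the discrete-series structure is indispensable: it is the one-sided boundedness of the spectrum of $d_{l\pm}(s^0)$ that makes $\Ket{0}_l$ the unique direction selected by the constraint (and, as already noted in the text, renders a separate sign constraint unnecessary).

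Finally I would establish unitarity and conclude. Using that each $D_{l\pm}(\omega,\gamma)$ is unitary on $\mathcal{H}_{l\pm}$ and that $\Ket{0}_l$ is a unit vector for \eqref{innerprodl}, the fibrewise inner product is $(T\psi_1(p),T\psi_2(p))_l=\psi_1(\omega,\gamma)\overline{\psi_2(\omega,\gamma)}$; integrating over $O^T_m$ and identifying $O^T_m\simeq\widetilde{\mathrm{SU}}(1,1)/\tilde N^T$ through the diffeomorphism $p=-\lambda m\,\Ad_{(\omega,\gamma)}s^0$ (under which the two invariant measures agree once normalised consistently) shows $\|T\psi\|=\|\psi\|$, so $T$ is an isometry, hence with surjectivity a unitary isomorphism. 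Since $\pie_{ms}$ is one of the induced UIRs of $P^\infty_3$ classified in \cite{Grigore} it is irreducible, and unitary equivalence preserves irreducibility, which yields the ``in particular'' clause. Of all these steps only the eigenvalue/surjectivity argument is substantive; the intertwining and isometry computations are mechanical, and the measure identification requires only a remark about normalisation.
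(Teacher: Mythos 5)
Your proposal is correct and follows essentially the same route as the paper: the same map $\psi\mapsto\psi(\omega,\gamma)D_{l\pm}(\omega,\gamma)\Ket{0}_l$ as intertwiner, surjectivity via conjugating the spin constraint to the eigenvalue equation $d_{l\pm}(s^0)\chi=-is\chi$ and using that $\Ket{0}_l$ is the unique eigenvector with that eigenvalue (together with the equivariance of the resulting scalar factor), and unitarity from the fibrewise unitarity of $D_{l\pm}$ integrated over $O^T_m$. Your explicit spectral check of $d_{l\pm}(s^0)$ and the appeal to the classification in \cite{Grigore} for irreducibility of the equivariant side are only slightly more detailed versions of what the paper leaves implicit.
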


\begin{proof}
We claim that the following maps  are intertwiners:
\begin{equation}
L_{\pm} \colon V^{A}_{ms} \rightarrow W^{A}_{ms},\qquad (L_{\pm}(\psi))(p)= \psi(\omega, \gamma) D_{l\pm}(\omega, \gamma)\Ket {0}_l,
\end{equation}
where, as before, $(\omega,\gamma)\in\widetilde{\mathrm{SU}}(1,1)$ is chosen so that  $p=-\lambda m \Ad_{(\omega, \gamma)}(s^0)$ and $s=l$ for the positive series and $s=-l$ for the negative series. We have already shown that
$\psi(\omega, \gamma) D_{l\pm}(\omega, \gamma)\Ket {0}_l$ only depends on $p$, satisfies the spin constraint and has support entirely on the orbit $O^T_m$, so that the mass  constraint is  also satisfied. Thus the maps $L_\pm$ are well-defined.

The maps $L_{\pm}$ are injective because of the unitarity of $D_{l\pm}$.  To show that they are  surjective,  we pick $\tilde \phi \in W^{A}_{ms}$ and construct a preimage.  Focusing on $D_{l+}$, and noting that 
\begin{equation}
d_{l+}(p)=-\lambda m D_{l+}(\omega, \gamma)d_{l+}(s^0)D_{l+}((\omega, \gamma)^{-1}),
\end{equation}
 the spin constraint  \eqref{AnyonSpinCon} is equivalent to
 \bee
  d_{l+}(s^0)D_{l+}((\omega, \gamma)^{-1})\tilde \phi (p) = -is D_{l+}((\omega, \gamma)^{-1})\tilde \phi (p).
 \eee
Recalling that $l=s$, comparing with \eqref{InfAnyonRep+}, and recalling that  $\ket{0}_l$ is, up to a factor, the unique solution of 
$   d_{l+}(s^0)f = -il  f$, we deduce the proportionality  
\bee
D_{l+}((\omega, \gamma)^{-1})\tilde \phi (p) = \psi (\omega,\gamma) \ket{0}_l,
\eee
where the proportionality  factor $\psi$ may depend on $(\omega,\gamma)$. Moreover, it must have the property
\bee
\psi (\omega+\alpha,\gamma) = e^{-i\alpha s} \psi(\omega,\gamma), 
\eee
to ensure independence of  the choice of $(\omega,\gamma)$  for given $p$, since
\begin{align}
\psi (\omega+\alpha,\gamma) \ket{0}_l &=  D_{l+}((\omega+\alpha, \gamma)^{-1})\tilde \phi (p) \nonumber  \\
& = D_{l+}(-\alpha,0)D_{l+}((\omega, \gamma)^{-1})\tilde \phi (p) \nonumber \\
&= D_{l+}(-\alpha,0)\psi (\omega,\gamma) \ket{0}_l \nonumber \\
&=e^{-i\alpha s} \psi (\omega,\gamma) \ket{0}_l.
\end{align}
Thus $\psi \in V^A_{ms}$ and $L_+(\psi) =\tilde \phi$, as required to show that $L_+$ is surjective.
An analogous argument for $L_{-}$ shows that both  $L_{\pm}$  are bijections. 

The  intertwining property is equivalent to  the commutativity of the diagram
\bee
\begin{tikzcd}[row sep=huge, column sep = large]
V^A_{ms} \arrow{r}{\quad \pie_{ms}((\omega, \gamma),a) \quad} \arrow[swap]{d}{L_{\pm}} & V^A_{ms} \arrow{d}{L_{\pm}} \\
W^A_{ms} \arrow{r}{\quad \pic_{ms} ((\omega, \gamma),a)\quad } & W^A_{ms}
\end{tikzcd},
\eee
for $((\omega, \gamma),a)\in P^\infty_3$. This is a straightforward calculation based upon the maps $L_{\pm}$, and the actions given in \eqref{AnyonicUIR} and \eqref{AnyonicCovAction}.

The unitarity of $L_{\pm}$ follows from the unitarity of $D_{l\pm}$, since 
\begin{align}
(L_\pm\psi_1, L_\pm \psi_2) &= \int_{O^T_{m}}(\psi_1(\omega,\gamma) D_{l\pm} (\omega,\gamma)\ket{0}_l, \psi_2(\omega,\gamma) D_{l\pm} (\omega,\gamma)\ket{0}_l)_l
\, d \nu \nonumber \\
& = \int_{O^T_{m}}  \psi_1(\omega,\gamma)  \bar{\psi}_2(\omega,\gamma)\, d \nu.
\end{align}
\end{proof}

Finally, we Fourier transform the spin constraint in the form \eqref{convenspin}  to obtain the  anyonic wave equation promised in the title of this section. Since the field $\phi$ lives on the Lie algebra $\mathfrak{su}(1,1)$, its Fourier transform should live on $\mathfrak{su}(1,1)^*$, i.e., the Fourier transform is a map
\bee
\label{flatfourier}
L^2(\mathfrak{su}(1,1), \mathcal{H}_{l\pm})\rightarrow L^2(\mathfrak{su}(1,1)^*, \mathcal{H}_{l\pm}). 
\eee
Using the terminology introduced  after  equation \eqref{p3algebra}, we  expand $x\in \mathfrak{su}(1,1)^*$ and $p\in \mathfrak{su}(1,1)$  as 
\bee
x=x^aP_a, \quad p= -\lambda p_as^a. 
\eee
Then, with the pairing  given  in \eqref{pairing},  
we define  the Fourier transform of $\tilde \phi \in W_{ms}^A$ by $\phi$ 
as 
\bee
\phi_{\pm}(x)=\int_{\mathfrak{su(1,1)}}
e^{i\langle x,p\rangle}\tilde \phi_{\pm}(p)\; d^3 \bp= \int_{\RR^3}
e^{i\bx\cdot\bp}\tilde \phi_{\pm}(p)\; d^3 \bp.
\eee

The field  $\phi_\pm$ satisfies the Klein-Gordon equation
\bee
\label{KG}
(\partial_0^2 -\partial_1^2-\partial_2^2+m^2)\phi_{\pm} =0,
\eee
 by virtue of the mass constraint. The spin constraint  implies the  following first order equation: 
\bee
\label{anywave1}
\left(d_{l\pm}(s^a)\partial_a - ms \right)\phi_{\pm}=0,
\eee
where we wrote $\partial_a=\partial/\partial x^a$. 
Using the explicit forms \eqref{dl+}  and \eqref{dl-} of $d_{l\pm}$ with $l=s$ for the positive series  and $l=-s$ for the negative series, 
and with the abbreviations 
\bee
\partial=\frac 12\left(\partial_2 -i \partial_1\right), \quad 
\bar{\partial}=\frac 12\left(\partial_2 +i \partial_1\right),
\eee
the  equation \eqref{anywave1} can also  be written as  
\bee
\label{anywave2+}
(-is\partial_0 +2isz\partial  +i\left(\bar{\partial}- z\partial_0+ z^2 \partial \right)\frac{\dd}{\dd z} -ms)\phi_+=0,
\eee
and 
\bee
\label{anywave2-}
(-is\partial_0 +2is\bar{z}\bar{\partial}  -i\left(\partial- \bar{z}\partial_0+ \bar{z}^2 \bar{\partial }\right)\frac{\dd}{\dd\bar z} -ms)\phi_-=0.
\eee

The anyonic relativistic wave equation we have constructed  for arbitrary spin $s \in \RR$ can be viewed in two ways: either as a partial differential equation in Minkowski space for a field taking values in any infinite-dimensional Hilbert space, as suggested by the formulation \eqref{anywave1}, or as a partial differential equation for a field on the product of Minkowski space and the hyperbolic disk, as emphasised in the formulation \eqref{anywave2+} and \eqref{anywave2-}.

It is worth stressing that, for  $s \in \frac{1}{2} \NN$, our anyonic equation \eqref{anywave1}  does not reduce to the equation  \eqref{SpinConstraint} for an equivariant field with finitely many components. These equations are not equivalent as they are characterised by different irreducible representations.

\section{Gravitising anyons}
\label{anywave}
\subsection{The  Lorentz double  and its representations}
We now extend and  apply our  method for deriving wave equations from   Lorentz covariant UIRs of the Poincar\'e group to a deformation of the Poincar\'e symmetry to  the quantum double of the  universal cover of the  Lorentz group, or Lorentz double for short. 
As reviewed in our Introduction, this is motivated by results from the study of 3d gravity and a general interest in understanding possible quantum deformations of standard wave equations. Referring to  \cite{Lessons3DGrav} and \cite{NonCommutSch} for reviews, 
we sum up evidence for the emergence of quantum doubles in the quantisation of 3d gravity.

\noindent {\em Deformation of Poincar\'e symmetry:}  As explained in \cite{EuclidSchroers}  and \cite{BaisScatt} for the, respectively, Euclidean and Lorentzian case, the quantum double of the rotation and Lorentz group is a deformation of the group algebra of, respectively, the Euclidean and Poincar\'e group. 

\noindent {\em Gravitational scattering:}   The $R$-matrix of the Lorentz double can be used to derive a universal scattering cross section for massive particles with spin by treating   gravitational  scattering  in 2+1 dimensions as a non-abelian Aharonov-Bohm scattering process \cite{BaisScatt}.  This universal scattering cross section agrees with previously computed  special cases, like the quantum scattering of a  light spin 1/2 particle on the conical spacetime generated by a heavy massive particle, in suitable limits  - see \cite{tHooft,DeserJackiw}.

\noindent {\em Combinatorial quantisation:} The quantum double of the Lorentz group arises naturally in the combinatorial quantisation of the Chern-Simons formulation of 3d gravity with vanishing cosmological constant. The classical limit of the quantum $R$-matrix is a classical $r$-matrix which is compatible with the  non-degenerate bilinear symmetric and invariant pairing used in the Chern-Simons action \cite{EuclidSchroers,BaisScatt}, and the Hilbert space of the quantised theory can be constructed from unitary representations of the Lorentz double \cite{MeusburgerSchroers1,CMBSCombQuant}.

\noindent {\em Independent derivations:} Quantum  doubles also emerges in approaches to 3d quantum gravity which do not rely on  the combinatorial quantisation programme.  In \cite{LQGQDouble} the quantum double  is shown to play the role of quantum symmetry in 3d loop quantum gravity. In \cite{FL} it appears in  a path integral approach to 3d quantum gravity.  

In analogy with our treatment of the Poincar\'e group in 2+1 dimensions, we consider the double cover $\mathrm{SU}(1,1)$ and the universal cover $\widetilde{\mathrm{SU}}(1,1)$ of the identity component of the Lorentz group. Our goal is to obtain a deformation of the wave equation by covariantising and then Fourier transforming, in a suitable sense, the UIRs of  the quantum double of $\widetilde{\mathrm{SU}}(1,1)$. This extends the results obtained in \cite{SchroersWilhelm} for the double cover $\mathrm{SU}(1,1)$. As we shall see, the universal cover is technically more involved but also conceptually more interesting.

The quantum double of a Lie group can be defined in several ways. We follow \cite{QuantDoubBaisMull,QDLocComp} with the conventions used in \cite{SchroersWilhelm,SemiDual}. In this approach we view the quantum double $\mathcal{D}(G)$ as the  Hopf algebra which,  as a vector space,  is the space of continuous complex valued functions $C(G\times G)$.  In order to exhibit the full Hopf algebra structure we need to adjoin singular $\delta$-distributions. 

The Hopf algebra structure for $\mathcal{D}(G)$ with product $\bullet$, co-product $\Delta$, unit $1$, co-unit $\epsilon$, antipode $S$, $*$-structure and ribbon element $c$ is then as follows:
\begin{align}
&(F_1 \bullet F_2)(g,u) =\int_{G}F_1(v,vuv^{-1})F_2(v^{-1}g,u)d v, \nonumber \\
&1(g,u)= \delta_e(g), \nonumber \\
&(\Delta F)(g_1,u_1;g_2,u_2)=F(g_1,u_1u_2)\delta_{g_1}(g_2), \nonumber \\
&\epsilon(F)=\int_G F(g,e) d g, \nonumber \\
&(SF)(g,u)=F(g^{-1}, g^{-1}u^{-1}g), \nonumber \\
&F^*(g,u)=\overline{F(g^{-1}, g^{-1}ug)}. \nonumber \\
&c(g,u)=\delta_g(u),
\end{align}
where we write $dg$ for the left Haar measure on the group.

The representation theory of the double is given in \cite{QDLocComp}. In the case of $\mathcal{D}(\mathrm{\mathrm{SU(1,1)})}$,  the UIRs are classified by  conjugacy classes in $\mathrm{SU(1,1)}$ and  UIRs of associated stabiliser subgroups. This should be viewed as a deformation of the discussion of $\tilde{P}_3$, where  we had adjoint orbits in the linear momentum space. In the gravitational case, the group itself is interpreted as momentum space and orbits are conjugacy classes. Here we encounter the idea of curved momentum space discussed in the outline. 

As we are interested in the case of massive particles we will only give the analogue of the massive representations of $\tilde{P}_3$, and refer to   \cite{QDLocComp} and  \cite{SchroersWilhelm} for the complete list.  By definition, elements of $\mathrm{SU(1,1)}$  have eigenvalues which multiply to $1$. In the case where these eigenvalues are complex conjugates, one has two disjoint families of elliptic conjugacy classes labelled by an angle $\theta$:
\begin{equation}
E(\theta)=\Set{v\exp \left(-\theta s^0\right)v^{-1}|v\in \mathrm{\mathrm{SU(1,1)}},\;\theta \in (0,2\pi)\cup (2\pi,4\pi)}.
\end{equation}
The stabiliser subgroup of the representative element $\exp (-\theta s^0)$ in $E(\theta)$ is
\begin{equation}
N^T=\Set{\exp (-\alpha s^0)|\alpha \in [0,4\pi)}\simeq U(1).
\end{equation}

In (2+1)-dimensional gravity, the variable $\theta$ parametrising the conjugacy classes gives the mass of a particle in units of the Planck mass, or in our convention $\theta = \lambda m$. Geometrically, it gives the deficit (or surplus) angle of the conical spatial geometry surrounding the particle's worldline. 

The carrier spaces of UIRs of $\mathcal{D}(\mathrm{SU(1,1)})$ can, as with the Poincar\'{e} group, be given in terms of functions on $\mathrm{SU(1,1)}$ subject to an equivariance condition. The equivariance condition only depends upon the stabiliser subgroup, and in fact the carrier space  is undeformed. The action of $\mathcal{D}(\mathrm{SU(1,1)})$ on $V_{ms}$ is  a deformed version of  \eqref{pi ms rep} and is given by
\begin{equation}
\label{doubleequivariant}
(\Pie_{ms}(F)\psi )(v)=\int_{\mathrm{SU(1,1)}}F\left(g,g^{-1}ve^{-m\lambda s^0}v^{-1}g\right) \psi(g^{-1}v) \, d g,
\end{equation}
where $F \in \mathcal{D}(\mathrm{SU(1,1)})$ and $g,\:v \in \mathrm{SU}(1,1)$. To relate this formula to the  representation of the Poincar\'e group, it is useful to consider singular elements
\bee
F=\delta_h\otimes f, \quad h \in \mathrm{SU}(1,1), \; f\in C(\mathrm{SU}(1,1)).
\eee
Its action on $V_{ms}$  now  more closely resembles that of $(h,a)  \in \tilde{P}_3$ in the UIRs $\pie_{ms}$:
\bee
(\Pie_{ms}(\delta_h\otimes f)\psi ](v)=f(h^{-1}ve^{-m\lambda s^0}v^{-1}h)\psi(h^{-1}v), 
\eee
with $f$ generalising the function $\exp(i\langle a, \cdot\rangle)$.

In \cite{SchroersWilhelm}, local covariant fields are introduced for $\mathcal{D}(\mathrm{SU(1,1)})$ and  deformed momentum space (spin) constraints are derived. After Fourier transform these constraints are interpreted as deformed relativistic wave equations. In \cite{SchroersWilhelm} this is explicitly done for particles of spin $s=0, \frac{1}{2}$ and $1$.   We will not review these results here, but derive the analogues  for the anyonic case, where we need to consider the quantum double $\mathcal{D}(\widetilde{\mathrm{SU}}(1,1))$. The UIRs  are discussed in \cite{BaisScatt}. We only recall  the UIRs describing massive particles at this point, though we will need all  conjugacy classes when we consider the Fourier transform in the next section.

Massive particles are  described by UIRs with  conjugacy classes  obtained by exponentiating timelike generators of $\mathfrak{su}(1,1)$. Recalling from \eqref{expcover}  that mathematically positive rotations by an angle  $\mu$  have the form $\exp (-\mu s^0)=(\mu,0)$, we define
\begin{equation}
\label{TimelikeConClass}
E(\mu)=\Set{v(\mu, 0)v^{-1}|v \in \widetilde{\mathrm{SU}}(1,1),\;\mu \in (\RR \setminus 2\pi \ZZ)}.
\end{equation}
Elements in this class project to elliptic elements in $\mathrm{SU}(1,1)$ and our notation is chosen to reflect this. 

The interpretation of the unbounded   parameter $\mu$ in terms of 3d gravity is  something we will discuss in detail in Sect.~\ref{wavesect}.   For now we  again identify it with the mass in Planck units, i.e., 
\bee
\mu = \lambda m,
\eee
but note that  in the decomposition
\bee
\label{decompmu}
\mu = \mu_0 + 2\pi n, \quad n\in \ZZ, \quad \mu_0 \in  (0,2\pi),
\eee
only the  `fractional part'  $\mu_0$ has a classical  geometrical interpretation as a deficit angle. The integer parameter
\bee
n=\text{int}\left(   \frac {\mu}{2\pi}  \right)
\eee
 would affect gravitational Aharonov-Bohm scattering, as mentioned in the Introduction, but has no obvious classical meaning. In that sense, it is a `purely quantum' aspect of the particle. 

We will need a fairly detailed understanding of the  elliptic conjugacy classes $E(\mu)$ later in this paper, so we note that a generic element can be written  without loss of generality by choosing $v=(0,\beta)$, $\beta\in D$, so that 
\bee
(\omega,\gamma) =(0,\beta)(\mu,0)(0,-\beta),
\eee
and therefore, using \eqref{SU(1,1)tildeproduct}, 
\bee
\omega= \mu +\frac 1 i \ln \frac{1-|\beta|^2e^{i\mu_0}}{1-|\beta|^2 e^{-i\mu_0}}.
\eee
It is an elementary exercise to check that, if $\mu_0 < \pi$, then $\omega < \mu +\pi$, and if $\mu_0>\pi$ then 
$\omega < \mu +(2\pi-\mu_0)$.  It follows that 
\bee
\label{interparts}
\text{int} \left( \frac {\omega}{2\pi}\right)
=\text{int}\left(\frac {\mu}{2\pi}\right),
\eee
and this will be useful later.

The stabiliser subgroup of the representative element $(\mu,0)$ is given by
\begin{equation}
\tilde{N}^T=\Set{(\omega, 0) \in \widetilde{\mathrm{SU}}(1,1)} \simeq \RR.
\end{equation}
The stabiliser subgroup is the same as for UIRs  of $P^\infty_3$ describing massive particles, and its UIRs are labelled by a real-valued spin $s$. 

The carrier space of UIRs labelled by the mass parameter $m$ and spin $s$ is $V^A_{m s}$ as defined in \eqref{AnyonicEquivariance}.  Elements $F$  of the double $\mathcal{D}(\widetilde{\mathrm{SU}}(1,1))$ act on $\psi \in V^A_{ms}$ according to 
\begin{equation}
\label{coverdoubleequivariant}
(\Pie_{ms}(F)\psi )(v)=\int_{\widetilde{\mathrm{SU}}(1,1) }F\left(g,g^{-1}v(\lambda m, 0)v^{-1}g\right) \psi(g^{-1}v)d g,
\end{equation}

As  in the previous section, we now use the equivariant UIRs  to construct  a  covariant formulation.
\begin{definition}[Deformed Anyonic Covariant Field] The deformed anyonic covariant field  $\tilde{\phi}_{\pm}$ 
associated to an equivariant field $\psi\in V^A_{ms}$  is the map
\begin{align}
\label{gravancov}
\tilde{\phi}_{\pm} \colon E(\lambda m) \rightarrow \mathcal{H}_{l\pm},
\end{align}
where $\mathcal{H}_{l \pm}$ is the carrier space for the discrete series representations of $\widetilde{\mathrm{SU}}(1,1)$ \eqref{SL2 Coverings},
defined  via
\begin{align}
\tilde{\phi}_{\pm}(u)=\psi(v)D_{l\pm}(v)\Ket {0}_l.
\end{align}
Here $u,v\in \widetilde{\mathrm{SU}}(1,1)$, $v$ is chosen so that $u=v(\lambda m,0)v^{-1}$ and $s=l$ for $\tilde{\phi}_+$ and $s=-l$ for $\tilde{\phi}_-$.
\end{definition}

\begin{lemma}
The covariant fields $\tilde{\phi}_\pm$ are well-defined.
\end{lemma}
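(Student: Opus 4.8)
The plan is to trace the single source of ambiguity in the definition of $\tilde\phi_\pm(u)$ -- namely the choice of $v\in\widetilde{\mathrm{SU}}(1,1)$ satisfying $u=v(\lambda m,0)v^{-1}$ -- and to show that the value assigned in $\mathcal{H}_{l\pm}$ does not depend on it. Existence of at least one such $v$ for every $u\in E(\lambda m)$ is immediate from the definition \eqref{TimelikeConClass} of the conjugacy class, so only independence of the choice needs checking. First I would observe that any two admissible conjugators $v$ and $v'$ satisfy $v(\lambda m,0)v^{-1}=v'(\lambda m,0)v'^{-1}$, so that $v'^{-1}v$ commutes with $(\lambda m,0)$. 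Since the centraliser of $(\lambda m,0)$ in $\widetilde{\mathrm{SU}}(1,1)$ is exactly the stabiliser subgroup $\tilde N^T=\{(\alpha,0):\alpha\in\RR\}$, the two choices differ by right multiplication by a stabiliser element, $v'=v(\alpha,0)$ for some $\alpha\in\RR$. Thus well-definedness reduces to the single identity
\bee
\psi(v)D_{l\pm}(v)\Ket{0}_l=\psi(v(\alpha,0))D_{l\pm}(v(\alpha,0))\Ket{0}_l\qquad\text{for all }\alpha\in\RR.
\eee

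Second, I would establish this identity by the same three ingredients used in the proof of the undeformed case \eqref{invariance}. Writing $v=(\omega,\gamma)$, so that $v(\alpha,0)=(\omega+\alpha,\gamma)$, the equivariance condition defining $V^A_{ms}$ in \eqref{AnyonicEquivariance} gives $\psi(v(\alpha,0))=e^{-is\alpha}\psi(v)$; the homomorphism property of $D_{l\pm}$ gives $D_{l\pm}(v(\alpha,0))=D_{l\pm}(v)D_{l\pm}(\alpha,0)$; and a direct reading of \eqref{PlusAnyonRep} and \eqref{MinusAnyonRep} on the constant function $\Ket{0}_l$ gives $D_{l+}(\alpha,0)\Ket{0}_l=e^{il\alpha}\Ket{0}_l$ and $D_{l-}(\alpha,0)\Ket{0}_l=e^{-il\alpha}\Ket{0}_l$. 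Combining these, the right-hand side collects an overall phase $e^{i(l-s)\alpha}$ for $\tilde\phi_+$ and $e^{-i(l+s)\alpha}$ for $\tilde\phi_-$, both of which equal $1$ precisely because of the spin-weight constraints $s=l$ and $s=-l$ imposed in the definition. This yields the required independence of $v$.

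Finally, the computation is structurally identical to the one carried out for the covariant fields on the orbit $O^T_m$ in the proof following \eqref{invariance}: the only change is that the adjoint orbit and the relation $p=-\lambda m\,\Ad_{(\omega,\gamma)}(s^0)$ are replaced by the conjugacy class and the relation $u=v(\lambda m,0)v^{-1}$, while the residual freedom is the same stabiliser $\tilde N^T$ in both settings. Consequently I expect no genuine obstacle here; the only point that warrants care is the claim that the centraliser of $(\lambda m,0)$ is no larger than $\tilde N^T$, so that the freedom in $v$ is indeed exhausted by right multiplication by $(\alpha,0)$. This, however, is already recorded in the identification of the stabiliser subgroup preceding the definition, so the argument goes through essentially verbatim.
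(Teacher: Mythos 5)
Your proposal is correct and follows essentially the same route as the paper: the paper's proof likewise reduces well-definedness to the identity $\psi(v)D_{l\pm}(v)\Ket{0}_l=\psi(v(\alpha,0))D_{l\pm}(v(\alpha,0))\Ket{0}_l$ and then invokes the computation \eqref{invariance} (equivariance of $\psi$, the homomorphism property of $D_{l\pm}$, and the phase action on $\Ket{0}_l$, cancelling precisely when $s=\pm l$). Your explicit justification that the ambiguity in $v$ is exhausted by the centraliser $\tilde N^T$ of $(\lambda m,0)$ is a point the paper leaves implicit, but it does not change the argument.
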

\begin{proof} One needs to check that the definition is independent of  the choice of $v$, i.e., that 
\begin{equation}
\psi(v) D_{l\pm}(v)\Ket {0_l}=\psi(v(\alpha,0)) D_{l\pm}(v(\alpha,0))\Ket {0}_l \quad \forall \alpha \in \RR.
\end{equation} This follows   by the calculation \eqref{invariance}  carried out  for the universal cover of the Poincar\'e group.
\end{proof}

The anyonic covariant field carries a unitary representation of $\mathcal{D}(\widetilde{\mathrm{SU}}(1,1))$
which we  denote $\Pic$:
\begin{equation}
\label{GravAnyonicCovAction}
(\Pic_{ms}(F) \tilde{\phi}_{\pm})(u)=\int_{\widetilde{\mathrm{SU}}(1,1)}  F(g,g^{-1}ug) D_{l\pm}(g)\tilde\phi_{\pm}(g^{-1}ug) \; d g.
\end{equation}
Without further condition, this representation is not irreducible.  We need  gravitised, 
 anyonic versions of  the  spin and mass constraints.

\begin{lemma} [Deformed Anyonic Spin Constraint]
\label{ribbonlemma}
The  anyonic fields \eqref{gravancov} satisfy the following spin constraint
\bee
\label{DefAnyCon}
\left( D_{l\pm}(u)-e^{i\lambda m s} \right) \tilde{\phi}_{\pm}(u)=0.
\eee
This constraint can be expressed in terms of the ribbon element as 
\bee
\Pic_{ms}(c)\tilde{\phi}_{\pm} = e^{i\lambda m s}  \tilde{\phi}_{\pm}.
\eee
\end{lemma}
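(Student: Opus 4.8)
The plan is to establish the two assertions of the lemma in sequence: first the pointwise spin constraint \eqref{DefAnyCon} by a direct computation that exponentiates the Lie-algebra argument used for the flat case, and then its reformulation in terms of the ribbon element by substituting $c$ into the covariant action \eqref{GravAnyonicCovAction} and evaluating the resulting distributional integral.

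For the spin constraint I would begin from $u=v(\lambda m,0)v^{-1}$ and exploit that $D_{l\pm}$ is a homomorphism, so that the conjugating factors cancel:
\begin{align}
D_{l\pm}(u)\tilde\phi_\pm(u) &= \psi(v)D_{l\pm}(v)D_{l\pm}(\lambda m,0)D_{l\pm}(v^{-1})D_{l\pm}(v)\Ket{0}_l \nonumber \\
&= \psi(v)D_{l\pm}(v)D_{l\pm}(\lambda m,0)\Ket{0}_l.
\end{align}
The only remaining input is the action of the one-parameter subgroup $(\omega,0)$ on the ground state. Reading \eqref{PlusAnyonRep} and \eqref{MinusAnyonRep} with $\gamma=0$ and $f=\Ket{0}_l$ the constant function $z\mapsto 1$ (for which $z\cdot\pi(\omega,0)=e^{i\omega}z$ leaves the constant unchanged), one finds $D_{l\pm}(\lambda m,0)\Ket{0}_l=e^{\pm il\lambda m}\Ket{0}_l$, which is exactly the phase already encountered in \eqref{invariance}. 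Since $s=l$ for the $+$ series and $s=-l$ for the $-$ series, both signs collapse to the single factor $e^{i\lambda m s}$, giving $D_{l\pm}(u)\tilde\phi_\pm(u)=e^{i\lambda m s}\tilde\phi_\pm(u)$, which is \eqref{DefAnyCon}.

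For the ribbon reformulation I would insert $F=c$ with $c(g,u)=\delta_g(u)$ into \eqref{GravAnyonicCovAction}, obtaining
\begin{equation}
(\Pic_{ms}(c)\tilde\phi_\pm)(u)=\int_{\widetilde{\mathrm{SU}}(1,1)}\delta_g\bigl(g^{-1}ug\bigr)\,D_{l\pm}(g)\,\tilde\phi_\pm(g^{-1}ug)\,dg.
\end{equation}
The constraint $g^{-1}ug=g$ imposed by the delta forces $g=u$ as its unique solution. To make the localisation precise I would rewrite $\delta_g(g^{-1}ug)=\delta_e(g^{-2}ug)$ by left-invariance, set $g=u\exp X$, and check that the differential of $X\mapsto g^{-2}ug$ at $X=0$ is $-\Ad_{u^{-1}}$. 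Because $\widetilde{\mathrm{SU}}(1,1)$ is unimodular with $\det\Ad=1$, the Jacobian $|\det(-\Ad_{u^{-1}})|$ equals unity, so the integral collapses to the integrand at $g=u$, namely $D_{l\pm}(u)\tilde\phi_\pm(u)$. Feeding in the spin constraint just established then yields $\Pic_{ms}(c)\tilde\phi_\pm=e^{i\lambda m s}\tilde\phi_\pm$.

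I expect the main obstacle to be the careful handling of $\delta_g(g^{-1}ug)$, in which the integration variable appears both as the centre of the distribution and inside its argument; verifying that the accompanying Jacobian is trivial rather than an $\Ad$-dependent factor is the one step where a genuine linearisation, and the unimodularity of the group, are essential, and it is the only place where more than formal bookkeeping is required.
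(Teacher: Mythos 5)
Your proof is correct and follows essentially the same route as the paper: the homomorphism property of $D_{l\pm}$ together with the stabiliser action on $\Ket{0}_l$ gives the spin constraint, and inserting $c(g,u)=\delta_g(u)$ into \eqref{GravAnyonicCovAction} localises the integral at the unique solution $g=u$. Your verification that the Jacobian $|\det(-\Ad_{u^{-1}})|$ is unity is a rigorous detail the paper simply asserts; it strengthens rather than changes the argument.
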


\begin{proof} Let $u=v(\lambda m,0)v^{-1}$ where $v \in \widetilde{\mathrm{SU}}(1,1)$. Then, focusing on the positive series for simplicity, we compute
\begin{align}
D_{l+}(u)\tilde{\phi}_+(u)&= D_{l+}(u)\psi(v)D_{l\pm}(v)\Ket {0}_l \nonumber \\
&= \psi(v)D_{l+}(v)D_{l+}((\lambda m,0))\Ket {0}_l \nonumber \\
&= \psi(v)D_{l+}(v)e^{ il\lambda m}\Ket {0}_l \nonumber \\
&= e^{is\lambda ms}\tilde{\phi}_{+}(u) .
\end{align}
Using the action $\Pic_{ms}$ given in \eqref{GravAnyonicCovAction}, we also compute
\begin{align}
\label{ribbonact}
(\Pic_{ms}(c)\tilde{\phi}_+ )(u)&= \int_{\widetilde{\mathrm{SU}}(1,1)}  \delta_g(g^{-1}ug) D_{l+}(g)\tilde\phi_{+}(g^{-1}ug) \, dg \nonumber \\
&= D_{l+}(u)\tilde\phi_{+}(u),
\end{align}
thus confirming the second claim. The calculation for $\phi_-$  is entirely analogous.
\end{proof}

The mass constraint can be formulated using the projection map $\pi:\widetilde{\mathrm{SU}}(1,1) \rightarrow \mathrm{SU}(1,1)$ defined in \eqref{coverproject} and the relation \eqref{expoformula}. A necessary condition for an element  $u\in \widetilde{\mathrm{SU}}(1,1)$ to be in the conjugacy class $E(\lambda m) $  is 
\bee
\label{massconstraint}
\frac 12 \tr (\pi(u)) = \cos\left(\frac{\lambda m}{2}\right).
\eee 
This condition only  sees the fractional part in the decomposition \eqref{decompmu} of $\mu=\lambda m$. To ensure that $u \in 
E(\lambda m) $ we also need to impose the condition \eqref{interparts}. Writing $u=(\omega_u, \gamma_u)$ this reads 
\bee
\label{integerconstraint}
\text{int} \left( \frac {\omega_u}{2\pi}\right)
=\text{int}\left(\frac {\lambda m}{2\pi}\right). 
\eee
This is an analogue of the constraint relating the sign of the energy to that of the mass  in the representation theory of $P^\infty_3$. However,  it resolves an infinite  instead of a two-fold degeneracy and is not generally implied by the spin constraint \eqref{DefAnyCon}.

We thus arrive at the following carrier space for  the  anyonic covariant representation $\Pic$ 
of the double $\mathcal{D}(\widetilde{\mathrm{SU}}(1,1))$:
\begin{align}
\label{WGAms}
W_{ms}^{GA}= & \left\{ \tilde{\phi}_{\pm} \colon \widetilde{\mathrm{SU}}(1,1) \rightarrow \mathcal{H}_{l\pm} |  \left( D_{l\pm}(u) - e^{i\lambda ms}\right)\tilde{\phi}_{\pm}(u)=0, \right.
\nonumber \\
& \left.
\left(  \frac 12 \tr (\pi(u)) - \cos\left(\frac{\lambda m}{2}\right)\right) \tilde{\phi}_{\pm}(u)=0 ,
\left( \text{int} \left( \frac {\omega_u}{2\pi}\right)
-\text{int}\left(\frac {\lambda m}{2\pi}\right)\right) \tilde \phi 
(u)=0\right\},
\end{align}
where we choose the upper sign  for $s>0$ and the lower sign for $s<0$.

\begin{theorem}[Irreducibility of the carrier space $W^{GA}_{ms}$] The  covariant representation $\Pic_{ms}$  of  $\mathcal{D}(\widetilde{\mathrm{SU}}(1,1))$  on $W^{GA}_{ms}$  defined in  \eqref{GravAnyonicCovAction}     is unitarily equivalent to the  equivariant  representation $\Pie_{ms}$ on $V^{A}_{ms}$ defined in \eqref{coverdoubleequivariant}.  In particular, it is therefore irreducible.
\end{theorem}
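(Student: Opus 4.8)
The plan is to mirror the proof of Theorem \ref{intertwining} and exhibit explicit unitary intertwiners
\[
L_{\pm}\colon V^{A}_{ms}\longrightarrow W^{GA}_{ms},\qquad (L_{\pm}(\psi))(u)=\psi(v)\,D_{l\pm}(v)\Ket{0}_l,
\]
where $v\in\widetilde{\mathrm{SU}}(1,1)$ is any element with $u=v(\lambda m,0)v^{-1}$, and $s=l$ (resp.\ $s=-l$) for the upper (resp.\ lower) sign. That $L_{\pm}$ is independent of the choice of $v$ is precisely the content of the well-definedness lemma for the deformed covariant fields. That the image lands in $W^{GA}_{ms}$ follows from the earlier results: the spin constraint \eqref{DefAnyCon} holds by Lemma \ref{ribbonlemma}, while the support of $L_{\pm}(\psi)$ is by construction the conjugacy class $E(\lambda m)$, so the mass constraint \eqref{massconstraint} follows from conjugation invariance of the trace together with $\pi((\lambda m,0))=\exp(-\lambda m s^0)$, and the integer constraint \eqref{integerconstraint} follows from \eqref{interparts}.

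The routine parts are injectivity, unitarity and the intertwining property. Injectivity and unitarity are inherited from the unitarity of $D_{l\pm}$: the inner product $(L_{\pm}\psi_1,L_{\pm}\psi_2)$ collapses, exactly as in Theorem \ref{intertwining}, to $\int_{E(\lambda m)}\psi_1\bar{\psi}_2\,d\nu$ since $\Ket{0}_l$ is a unit vector. The intertwining property is the commutativity of the square relating $\Pie_{ms}$ in \eqref{coverdoubleequivariant} to $\Pic_{ms}$ in \eqref{GravAnyonicCovAction}; this is a direct calculation in which, on $u=v(\lambda m,0)v^{-1}$, the kernel $F(g,g^{-1}ug)$ of $\Pic_{ms}$ is matched against the kernel of $\Pie_{ms}$ using $g^{-1}ug=(g^{-1}v)(\lambda m,0)(g^{-1}v)^{-1}$ and the identity $D_{l\pm}(g)D_{l\pm}(g^{-1}v)=D_{l\pm}(v)$.

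The crux, and the step I expect to be the main obstacle, is surjectivity. Given $\tilde{\phi}\in W^{GA}_{ms}$, I would, for each $u=v(\lambda m,0)v^{-1}$ in its support, conjugate the spin constraint \eqref{DefAnyCon} to the representative and obtain
\[
D_{l+}((\lambda m,0))\bigl(D_{l+}(v^{-1})\tilde{\phi}(u)\bigr)=e^{i\lambda m s}\bigl(D_{l+}(v^{-1})\tilde{\phi}(u)\bigr),
\]
aiming to conclude that $D_{l+}(v^{-1})\tilde{\phi}(u)\propto\Ket{0}_l$, with the proportionality factor defining $\psi(v)$ and its equivariance following as in Theorem \ref{intertwining}. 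Here the deformation introduces a genuine subtlety. In the undeformed case the infinitesimal operator $d_{l+}(s^0)$ has the \emph{non-degenerate} spectrum $\{-i(l+n)\}_{n\in\NN}$, uniquely singling out $\Ket{0}_l$; by contrast, reading off from \eqref{PlusAnyonRep} that $z\cdot\pi(\lambda m,0)=e^{i\lambda m}z$, the group element $D_{l+}((\lambda m,0))$ acts on the orthonormal basis as multiplication by $e^{i(l+n)\lambda m}$. Its eigenvalue $e^{i\lambda m l}$ is therefore degenerate precisely when $n\lambda m\in 2\pi\ZZ$ for some $n\geq 1$, i.e.\ when $\lambda m/2\pi$ is rational, and for such resonant masses the spin constraint alone admits extra solutions in the fibre $\mathcal{H}_{l+}$ and fails to pin down the ground state.

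For generic mass, meaning $\mu=\lambda m$ with $\mu/2\pi$ irrational, the eigenvalue $e^{i\lambda m l}$ is non-degenerate, $D_{l+}(v^{-1})\tilde{\phi}(u)=\psi(v)\Ket{0}_l$ is forced, and surjectivity follows, completing the unitary equivalence and hence irreducibility; the negative series is identical with $s=-l$. The resonant case would require separate treatment, either by observing that it is excluded by the physical spin quantisation \eqref{SpinQuant} (which for genuinely anyonic spin produces irrational $\mu/2\pi$), or by supplementing the defining constraints of \eqref{WGAms} with a projection so that the fibre is the lowest-weight eigenline. I would flag this explicitly as the one place where the passage from the Lie-algebra constraint to the exponentiated, group-valued constraint is not automatic and where the quantum-group deformation genuinely alters the analysis.
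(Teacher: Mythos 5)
Your proposal follows the paper's own proof essentially step for step: the same intertwiners $L_{\pm}$, well-definedness from the preceding lemma, injectivity and unitarity inherited from the unitarity of $D_{l\pm}$, the intertwining square by the direct computation you indicate, and surjectivity by conjugating the spin constraint to the representative $(\lambda m,0)$ so that the fibre $D_{l+}(v^{-1})\tilde{\phi}(u)$ becomes an eigenvector of $D_{l+}((\lambda m,0))$.

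The one point where you deviate is precisely the point where you are more careful than the paper. The paper's proof simply asserts that $\ket{0}_l$ is, up to a factor, the unique solution of $D_{l+}(\mu,0)f=e^{i\mu l}f$; as you observe, $D_{l+}(\mu,0)$ multiplies the basis vector $e^+_{n,l}$ by $e^{i(l+n)\mu}$, so this uniqueness holds exactly when $\mu/2\pi$ is irrational, while for $\mu/2\pi=p/q$ in lowest terms the eigenspace is the infinite-dimensional span of the $e^+_{kq,l}$, $k\in\NN$. Your ``resonant mass'' caveat is therefore not a defect of your argument but a genuine gap in the paper's, and in fact in the statement as given: with $\mu=\lambda m$, choose any normalisable $\chi$ on $\widetilde{\mathrm{SU}}(1,1)$ with the equivariance $\chi(\omega+\alpha,\gamma)=e^{-i(l+q)\alpha}\chi(\omega,\gamma)$ and set $\tilde{\phi}(u)=\chi(v)\,D_{l+}(v)e^+_{q,l}$ for $u=v(\lambda m,0)v^{-1}$, extended by zero off $E(\lambda m)$. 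This field is well defined, satisfies all three constraints of \eqref{WGAms} because $e^{iq\mu}=1$, and is not in the image of $L_+$; moreover the space of all such fields is $\Pic_{ms}$-invariant and carries the equivariant UIR of spin $l+q$ rather than $l$, so at resonant masses $W^{GA}_{ms}$ is reducible and the claimed unitary equivalence fails. The fix is either your irrationality hypothesis or your strengthened definition of $W^{GA}_{ms}$ projecting each fibre onto the lowest-weight line; note that your alternative appeal to the spin quantisation \eqref{SpinQuant} only excludes the resonant case for irrational spins, so the strengthened definition is the cleaner route. In short, your proof is correct exactly where the paper's is, and the subtlety you flag is a real one that the paper's proof passes over silently.
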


\begin{proof}
We claim that the following maps  are intertwiners:
\begin{equation}
L_{\pm} \colon V^{A}_{ms} \rightarrow W^{GA}_{ms},\qquad (L_{\pm}(\psi))(u)= \psi(\omega, \gamma) D_{l\pm}(\omega, \gamma)\Ket {0}_l,
\end{equation}
where $(\omega,\gamma)\in\widetilde{\mathrm{SU}}(1,1)$ is chosen so that  $u=(\omega, \gamma)(\lambda m, 0) (\omega,\gamma)^{-1}$.
This follows the  steps in the proof of Theorem \ref{interparts}, but requires  replacing each statement for Lie algebras by the corresponding statement for groups.
Injectivity of the maps $L_\pm$ is immediate. To show  surjectivity,
we write the spin constraint of a  given state $\tilde \phi \in  W^{GA}_{ms}$, $s >0$ without loss of generality, as 
 \bee
  D_{l+}(\lambda m,0 )D_{l+}((\omega, \gamma)^{-1})\tilde \phi (u) = e^{i \lambda m  s} D_{l+}((\omega, \gamma)^{-1})\tilde \phi (u).
 \eee
Recalling that $l=s$, comparing with \eqref{InfAnyonRep+}, and also recalling that  $\ket{0}_l$ is, up to a factor, the unique solution of 
$ D_{l+}(\mu,0 )f = e^{i\mu l }  f$, we deduce the proportionality
\bee
D_{l+}((\omega, \gamma)^{-1})\tilde \phi (u) = \psi (\omega,\gamma) \ket{0}_l,
\eee
where the proportionality factor $\psi$ may depend on $(\omega,\gamma)$. Moreover, it must have the property
\bee
\psi (\omega+\alpha,\gamma) = e^{-i\alpha s} \psi(\omega,\gamma), 
\eee
to ensure independence of the choice of $(\omega,\gamma)$ for given $u$. Thus $\psi \in V^A_{ms}$.

The  intertwining property is equivalent to the  commutativity of the diagram
\bee
\begin{tikzcd}[row sep=large, column sep = large] 
V^A_{ms} \arrow{r}{\Pie_{ms}(F)} \arrow[swap]{d}{L_{\pm}} & V^A_{ms} \arrow{d}{L_{\pm}} \\
W^{GA}_{ms} \arrow{r}{\Pic_{ms} (F)} & W^{GA}_{ms}
\end{tikzcd},
\eee
for $F\in \mathcal{D}(\widetilde{\mathrm{SU}}(1,1))$.
This is a  straightforward calculation based upon the maps $L_{\pm}$, and the actions given in \eqref{doubleequivariant} and \eqref{GravAnyonicCovAction}.
\end{proof}

\subsection{Group Fourier transforms}

We now turn to the promised Fourier transform of  the covariant   UIRs of the Lorentz double. Our discussion here will  be less complete and rigorous than our treatment so far. In particular, we do not  survey  different approaches to Fourier transforms and differential calculus in the context of  Hopf algebras, but note that some relevant references are collected in \cite{SchroersWilhelm}. Instead, we only show how ideas first proposed by Rieffel in \cite{Rieffel} and recently pursued in the quantum gravity community under the heading of group Fourier transforms can be used to translate the algebraic mass and spin constraints in the definition \eqref{WGAms} into differential and difference equations.

 It is worth formulating the  problem we want to address
for a general Lie group $G$ with Lie algebra $\mathfrak{g}$. Concentrating for simplicity on complex-valued (rather than Hilbert space valued) functions, the standard Fourier transform \eqref{flatfourier}  is a map
\begin{equation}
\mathrm{L}^2(\mathfrak{g}) \rightarrow \mathrm{L}^2(\mathfrak{g}^*).
\end{equation}
However, in order to deal with the `gravitised' anyons, we require a Fourier transform
\begin{equation}
\mathrm{L}^2(G) \rightarrow \mathrm{L}_\star^2(\mathfrak{g}^*),
\end{equation}
where the $\star$ indicates the space has been equipped with a (generally non-commutative)  $\star$-product.

This is precisely the situation considered by Rieffel in \cite{Rieffel}, where he observed that, if the exponential map can be used to identify the Lie group with the Lie algebra, one can transfer the convolution product of functions on $G$ to functions on $\mathfrak{g}$ and then, by Fourier transform, to functions on $\mathfrak{g}^*$. This induces a non-commutative $\star$-product on functions on $\mathfrak{g}^*$ which is a strict deformation quantisation of the canonical Poisson structure on $\mathfrak{g}^*$. This works globally for nilpotent groups, but, as explained in \cite{Rieffel}, still makes sense, in an appropriate way, more generally.   For details we refer the reader to Rieffel's excellent exposition in the  paper \cite{Rieffel} which also contains comments on the  relation to other quantisation methods, such as Kirillov's coadjoint orbit method. 

 Ideas very similar to Rieffel's have, more recently and apparently independently, been considered by a number of authors in the context of quantum gravity \cite{FL,FM,Raasakka,GOR}. This work has resulted in a general framework called 
  group Fourier transforms. In developing  our Fourier transform  for gravitised anyons we essentially need to adapt and extend the ideas of Rieffel and the concept of a group Fourier transforms to $G=\widetilde{\mathrm{SU}}(1,1)$. 
  We    have found it convenient to use the terminology and notation used  in the discussion of group Fourier transforms, particularly in    \cite{Raasakka,GOR}, which we review briefly. 

The starting  point of the group  Fourier transform is the existence of non-commutative plane waves
\bee
E : G \times \mathfrak{g}^* \rightarrow \CC,
\eee
satisfying the following normalisation and  completeness relations
\begin{align}
\label{completeness} 
 E(e;x)&=1,  \nonumber \\
 E(u^{-1};x)&= \bar{E} ( u;x) = E(u;-x), \nonumber \\ 
\delta_e(u)&= \frac{1}{(2\pi)^d}\int_{\mathfrak{g}^*}E(u; x) \; dx, 
\end{align}
where $d$ is the dimension of $G$ and  $\delta_e$ is the  Dirac $\delta$-distribution  at the group identity element $e$   with respect to  the left Haar measure $dg$.  

Such non-commutative plane waves  induce a $\star$-product on a suitable set of functions on $\mathfrak{g}^*$ (to be specified below)  via the  group multiplication in $G$:  
\bee
\label{waveproduct}
E(u_1;x)\star E(u_2;x)=E(u_1u_2;x).
\eee
More precisely, given the non-commutative plane waves, one   defines 
\begin{align}
\mathcal{F}&: L^2(G) \rightarrow L^2_\star(\mathfrak{g}^*),  \nonumber \\
\phi(x)&=\mathcal{F}(\tilde \phi)(x)=\int_{G} E(u;x)\tilde{\phi}(u)\, du,
\end{align}
where  $ L^2_\star(\mathfrak{g}^*)$ is the image under $\mathcal{F}$ in  $ L^2(\mathfrak{g}^*)$, equipped with the $\star$-product defined by linear extension of \eqref{waveproduct} and with the  inner product imported from $ L^2(G)$. One checks that 
\bee
\langle \phi_1, \phi_2\rangle = \frac{1}{(2\pi)^d} \int_{\mathfrak{g}^\star} \bar{\phi}_1(x) \star \phi_2(x)\; dx = \int_G \bar{\tilde{\phi}}_1 \tilde{\phi}_2 \; dg .
\eee
By construction, this Fourier transform intertwines the convolution product on $L^2(G)$ with the star product on $ L^2_\star(\mathfrak{g}^*)$.

We also define a candidate for an inverse transform via
\begin{align}
\mathcal{F}^\star&: L^2_\star(\mathfrak{g}^*)\rightarrow  L^2(G), \nonumber \\ 
\tilde{\phi}(u)&=\mathcal{F}^\star(\phi)(u) =\frac{1}{(2\pi)^d} \int_{\mathfrak{g}^*}\bar{E}(u,x)\star \phi(x)\;dx ,
\end{align}
where  we emphasise the presence of the $\star$-product. 
It is easy to check  that completeness ensures that $ \mathcal{F}^\star \circ \mathcal{F}=\text{id}_{L^2(G)}$. However, $\mathcal{F} \circ \mathcal{F}^\star$ generally has a non-trivial kernel, see \cite{GOR}.

In \cite{GOR} it is shown that under certain assumptions, one can find  a coordinate map  $k:G\rightarrow \mathfrak{g}$ on $G$ and a function $\eta:G \rightarrow \CC$  so that,  up to a set of measure zero,  the plane waves take the form 
\begin{equation}
\label{PlaneWave}
E(u;x)= \eta(u) e^{i \langle x, k(u)\rangle}.
\end{equation}
Our task in the next section is to construct such non-commutative plane waves for $\widetilde{\mathrm{SU}}(1,1)$.

\subsection{Non-commutative waves  for $\widetilde{\mathrm{SU}}(1,1)$  and anyonic wave equations}
\label{wavesect}
Our proposal for a Fourier transform on  $\widetilde{\mathrm{SU}}(1,1)$ is based on the parametrisation of group elements summarised in the following proposition.

\begin{proposition}[Parametrisation of $\widetilde{\mathrm{SU}}(1,1)$]
Every element $(\omega, \gamma)\in \widetilde{\mathrm{SU}}(1,1)$ can be uniquely expressed   in terms of the $2\pi$-rotation  $\Omega$ \eqref{Omegadef} which generates the centre of $\widetilde{\mathrm{SU}}(1,1)$ and the exponential map via
\bee
\label{pnparametrisation}
(\omega, \gamma)=\Omega^n\widetilde{\exp}(p), \quad p = -\lambda \bp \cdot \bs \in \mathfrak{su}(1,1), \; n \in \ZZ, \; \lambda^2 \bp^2< (2\pi)^2, \; \text{and} \; \;p^0>0\; \text{if} \; \,\bp^2 >0.
\eee
\end{proposition}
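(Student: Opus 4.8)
The plan is to read the statement as the assertion that the exponential map, restricted to the admissible momenta, is a bijection onto a fundamental domain for the centre. Write
\[
\mathcal A=\left\{\bp\in\RR^3 : \lambda^2\bp^2<(2\pi)^2,\ p^0>0\ \text{whenever}\ \bp^2>0\right\},
\]
so that the admissible $p=-\lambda\bp\cdot\bs$ are exactly those with $\bp\in\mathcal A$. From \eqref{SU(1,1)tildeproduct} with $\Omega=(2\pi,0)$ one checks that $\Omega$ is central and acts by $\Omega^n(\omega',\gamma')=(\omega'+2\pi n,\gamma')$; in particular $\langle\Omega\rangle\cong\ZZ$ acts freely on $\widetilde{\mathrm{SU}}(1,1)$. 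Hence the claimed existence and uniqueness of $(n,p)$ is equivalent to the single statement that $q\circ\widetilde{\exp}\big|_{\mathcal A}$ is a bijection, where $q\colon\widetilde{\mathrm{SU}}(1,1)\to\widetilde{\mathrm{SU}}(1,1)/\langle\Omega\rangle$ is the orbit map. Using $\ker\pi=\langle\Omega^2\rangle$ from \eqref{Omegaker} together with $\pi(\Omega)=-\mathrm{id}$, I would identify $\widetilde{\mathrm{SU}}(1,1)/\langle\Omega\rangle\cong\mathrm{SU}(1,1)/\{\pm\mathrm{id}\}\cong\Lor$; and since $\pi\circ\widetilde{\exp}$ is the matrix exponential, $q\circ\widetilde{\exp}$ becomes the Lorentz-group exponential $\exp_{\Lor}\colon\mathfrak{su}(1,1)\to\Lor$. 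Everything thus reduces to proving that $\exp_{\Lor}$ maps $\mathcal A$ bijectively onto $\Lor$.

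\textbf{The core bijection.} I would establish this via the conjugacy-class trichotomy of Sect.~\ref{poincareconventions}, splitting $\mathcal A$ by the sign of $\bp^2$. The origin $\bp=0$ gives the identity; a forward timelike $\bp$ with $0<\lambda|\bp|<2\pi$ maps, via $\exp_{\Lor}(-\lambda m\,\Ad_v s^0)$, to the rotation by angle $\lambda|\bp|\in(0,2\pi)$ about the forward axis $\hat p$; spacelike $\bp$ map to boosts and lightlike $\bp\neq0$ to parabolic elements. The sectors have disjoint images because the type is a conjugation invariant distinguished by the absolute value of the trace. Injectivity inside each sector follows by inverting \eqref{expoformula}: $\tfrac12\tr$ returns $\cos(\lambda|\bp|/2)$ (resp.\ $\cosh$), which recovers $|\bp|$ as cosine is injective on $(0,\pi)$ (resp.\ cosh on $[0,\infty)$), after which the off-diagonal entries recover $\hat p$. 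Surjectivity onto the elliptic classes is exactly the statement, read off from \eqref{elliptic} after quotienting $\alpha\sim\alpha+2\pi$, that every elliptic element is a rotation by a unique angle in $(0,2\pi)$ about a unique forward axis; surjectivity onto the hyperbolic and parabolic classes uses \eqref{exponential}, since the representatives \eqref{hyperbolic} and \eqref{parabolic} missed by the matrix exponential differ from genuine exponentials only by $-\mathrm{id}$ and hence coincide with them in $\Lor$.

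\textbf{Assembly and the main difficulty.} Granting the bijectivity of $\exp_{\Lor}\big|_{\mathcal A}$, I would finish as follows. Given $(\omega,\gamma)$, surjectivity yields $p\in\mathcal A$ with $\exp_{\Lor}(p)=q(\omega,\gamma)$, so $\widetilde{\exp}(p)$ and $(\omega,\gamma)$ lie in the same $\langle\Omega\rangle$-orbit and $(\omega,\gamma)=\Omega^n\widetilde{\exp}(p)$ for some $n$; freeness of the action makes $n$ unique and injectivity of $\exp_{\Lor}\big|_{\mathcal A}$ makes $p$ unique. The step I expect to be the main obstacle is the elliptic sector: one must match the half-open angle range $(0,2\pi)$ and the forward orientation of the axis to the elliptic conjugacy classes with neither gap nor double count, and control the degenerate behaviour at the excluded boundary $\lambda|\bp|=2\pi$, where $\widetilde{\exp}$ sends the entire forward mass-shell to the single central element $\Omega$ (correctly excluded from the image of $\mathcal A$ by the strict inequality). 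The relation \eqref{interparts} is precisely the tool that pins down the sheet in this sector, and I would use it to verify that the elliptic part of $\mathcal A$ meets each $\langle\Omega\rangle$-orbit exactly once.
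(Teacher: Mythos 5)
Your proposal is correct, and it organises the argument differently from the paper. The shared skeleton is: project along the covering map, invoke the fact \eqref{exponential} that every element of $\mathrm{SU}(1,1)$ is $\pm\exp(p)$, and settle uniqueness through the elliptic/parabolic/hyperbolic trichotomy, with the strict bound $\lambda|\bp|<2\pi$ doing the real work in the elliptic case. The difference is in the packaging. The paper stays in $\mathrm{SU}(1,1)$: its existence proof is your assembly step almost verbatim (kernel $\langle\Omega^2\rangle$ and $\pi(\Omega)=-\mathrm{id}$ yielding even/odd powers of $\Omega$), while uniqueness is a hands-on case analysis of $\exp(p)=\pm\exp(p')$, the elliptic case being closed by noting that $\exp(p)$ and $\exp(p')$ commute, hence $p$ and $p'$ are parallel, and that the range restriction forbids a relative shift of $2\pi$ or $4\pi$. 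You instead quotient by the full centre, identify $\widetilde{\mathrm{SU}}(1,1)/\langle\Omega\rangle\cong\Lor$, and isolate one clean lemma -- $\exp_{\Lor}$ restricted to $\mathcal{A}$ is a bijection onto $\Lor$ -- which makes the fundamental-domain role of $\mathcal{A}$ transparent and eliminates sign-tracking; the paper's route is more computational but entirely self-contained in matrix algebra, needing no classification of elements of $\Lor$. Two caveats on your version. First, your injectivity-by-trace step is not by itself sufficient in the elliptic sector: in $\Lor$ the $\mathrm{SU}(1,1)$-trace is defined only up to sign, and since $\cos$ takes both signs on $(0,\pi)$, the possibility $\exp(p)=-\exp(p')$ is not excluded by trace inversion alone; what actually closes this case is your (correct) assertion that an elliptic element of $\Lor$ is a rotation by a unique angle in $(0,2\pi)$ about a unique forward axis, which is the quotient counterpart of the paper's commutativity argument and deserves the explicit justification you only gesture at. Second, relation \eqref{interparts} is not the needed tool: neither your core bijection nor the paper's proof uses it -- the paper invokes it only in the remark preceding the proposition, to match the integer $n$ with the conjugacy-class label $\mu$.
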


Before we enter the proof, we should point out that, for elements in the elliptic conjugacy class $E(\mu)$ defined in \eqref{TimelikeConClass}, the integer $n$ introduced  in the Proposition is the same integer which appears in the decomposition \eqref{decompmu} of the rotation angle $\mu$. This follows since, for $(\omega,\gamma)\in E(\mu) $, 
\bee
(\omega,\gamma) = v(\mu,0)v^{-1}=\Omega^n \widetilde{\exp}(p)
\eee
implies 
\bee
 v^{-1}\widetilde{\exp}(p)  v = (\mu_0, 0), \quad 0<\mu_0 = \lambda |\bp| <2\pi,
\eee
so that $(\omega,\gamma)$ belongs to the conjugacy class with label  $\mu =\mu_0 + 2\pi n$. 

\begin{proof} To construct the claimed representation of a 
given $(\omega, \gamma)\in \widetilde{\mathrm{SU}}(1,1)$, we first compute the element 
$u=\pi(\omega,\gamma) \in \mathrm{SU}(1,1)$. Then, as reviewed in Sect.~2 and discussed in \cite{SchroersWilhelm}, the element $u$ or the element $-u$  is in the image of the exponential map in $\mathrm{SU}(1,1)$, i.e., there is a 
$p  \in \mathfrak{su}(1,1)$ and a choice of sign so that 
\bee
\label{uexp}
 u = \pm \exp(p).
\eee
However, then
\bee
\pi (\widetilde{\exp}(p)) = \pm u = \pm \pi(\omega,\gamma).
\eee
For the positive sign, this means  $(\omega,\gamma)\widetilde{\exp}(-p)$ is in the kernel of $\pi$, which is generated by   $\Omega^2$.  Thus $(\omega,\gamma) = \Omega^{2m}\widetilde{\exp}(p)$ for some $m\in \ZZ$  in this case. 
For the negative sign, we  recall that  $\pi(\Omega) =-\text{id}$, to deduce  $(\omega,\gamma) = \Omega^{2m+1}\widetilde{\exp}(p)$ for some $m\in \ZZ$ in that case. Thus  we obtain the  claimed decomposition \eqref{pnparametrisation}, with  even $n$  for the  positive sign in \eqref{uexp} and odd $n$ for the negative sign.

In order to establish uniqueness of the decomposition \eqref{pnparametrisation},   consider $p,p'\in \mathfrak{su}(1,1)$ both satisfying  the stated assumptions and $n,n'\in \ZZ$ so that   
\bee
\widetilde{\exp}(p)\Omega^n = \widetilde{\exp}(p')\Omega^{n'}.
\eee
We need to show $p=p'$ and $n=n'$. 
Projecting  into $\mathrm{SU}(1,1)$  we deduce
\bee
\label{test}
\exp(p)=\pm \exp(p').
\eee
In particular, $ \exp(p)$ and $\exp(p')$ must be of the same type, i.e, both must be either elliptic, parabolic or hyperbolic. 

We first consider the  case where either  $p$ or $p'$ vanishes. If one, say $p$, did then \eqref{test} would imply $ \exp(p')= \pm\text{id}$, but under the restriction on $p'$, this is only possible if the upper sign holds and $p'=0$, so that $n=n'$ follows.

Both parabolic and hyperbolic elements have the property that, if such an element is in the image of the exponential map, its negative is not.  For such elements we must therefore have the upper sign in \eqref{test}. Moreover, one checks from the expressions \eqref{expoformula} that parabolic and hyperbolic elements which are in the image of the exponential map have a unique logarithm, so that we conclude $p=p'$ and hence $n=n'$.  

Finally, elliptic elements differ from hyperbolic and parabolic elements in that both the element and its negative   are in the image of the exponential map, so that we must consider both signs in \eqref{test}. With either sign, that equation shows that $\exp(p)$ and $ \exp(p')$ commute  with each other. Then,  the explicit expression \eqref{expoformula} and 
\bee
\label{rangelimits}
0 <\lambda |\bp| <2\pi, \quad 0 <\lambda |\bp'| <2\pi,
\eee 
imply that $p$ and $p'$ must be multiples of each other. By the assumption that both lie in the forward light cone, we can deduce (recalling the sign conventions \eqref{signconventions}) that 
\bee
p= -\lambda |\bp|\hat p\cdot\bs , \quad p'=- \lambda|\bp'|\hat p\cdot\bs
\eee
Then \eqref{test} requires $\exp(p-p')=\pm\text{id}$ which is only possible if $ \lambda |\bp|$ and  $\lambda |\bp'|$ are equal or differ by $2\pi$ or $4\pi$. The last two possibilities are not compatible with \eqref{rangelimits}, and so we deduce $p=p'$ and $n=n'$ in this  case as well.
\end{proof}

\begin{figure}[h]
\begin{centering}
\includegraphics[width=4truecm,trim= 0 0.3cm 0 0]{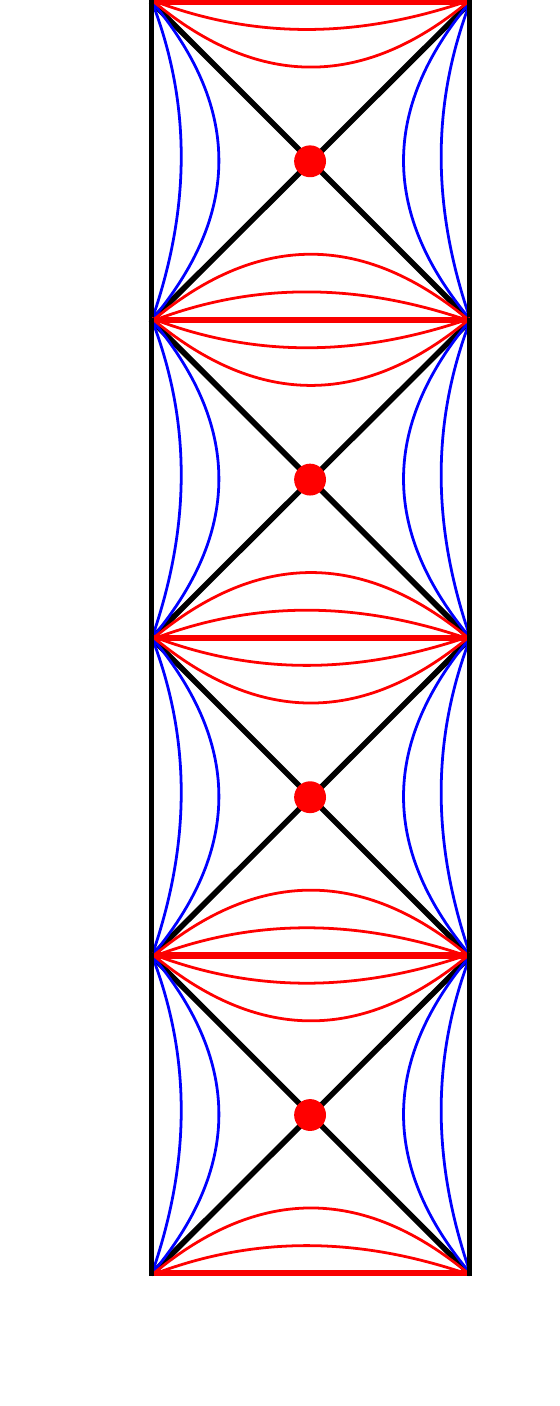}
\hspace{2cm}
\includegraphics[width=3.15truecm ]{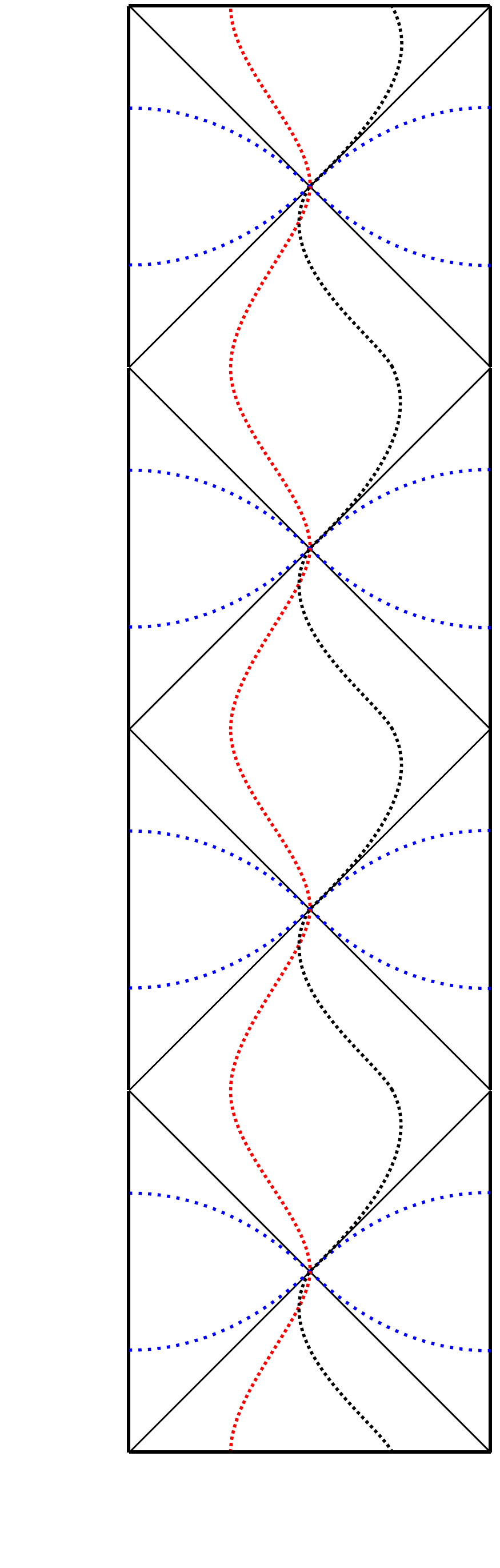}
\vspace{-0.4cm}
\caption{Conjugacy classes and exponential curves in  $\widetilde{\mathrm{SU}}(1,1)$. The diagram on the left shows selected  conjugacy classes   from the list \eqref{conjugacylist}, including  single element (red dots),  elliptic  (red),  parabolic (black) and hyperbolic conjugacy classes (blue).  The diagram on the 
right shows selected exponential curves \eqref{expocurves}  with initial tangent vector $p$ timelike (red), lightlike (black) and spacelike (blue).}
\label{Cylinder}
\end{centering}
\end{figure}

The decomposition \eqref{pnparametrisation} can be be visualised and illustrated by thinking of  $\widetilde{\mathrm{SU}}(1,1)$ as an  infinite cylinder, with  $\omega$ plotted along the vertical axis and  $\gamma$ parametrising the horizontal  slices. In Fig.~\ref{Cylinder} we show a vertical cross section of  this cylinder and display the conjugacy classes and the  exponential curves. 

A full list of conjugacy classes of  $\widetilde{\mathrm{SU}}(1,1)$ is given in  the appendix of \cite{BaisScatt}. There are  four kinds: single element conjugacy consisting of the elements $\Omega^n$, as well as elliptic, parabolic and hyperbolic conjugacy classes covering the corresponding classification for $\mathrm{SU}(1,1)$ discussed in Sect.~\ref{conventions}. In terms of the decomposition \eqref{pnparametrisation},   they can be described as follows: 
\begin{align}
\label{conjugacylist}
O^n&=\{\widetilde{\exp}(p)\Omega^n \in \widetilde{\mathrm{SU}}(1,1) | p=0\}, \nonumber \\
E^n_{\mu_0}& = \{ \widetilde{\exp}(p)\Omega^n \in \widetilde{\mathrm{SU}}(1,1) | p^0>0, \lambda |\bp| = \mu_0, 0< \mu_0   < 2\pi   \}, =E(\mu_0+2\pi n)\nonumber \\
P^n_+&= \{ \widetilde{\exp}(p)\Omega^n \in \widetilde{\mathrm{SU}}(1,1) | p^0>0,  \bp^2 =0   \}, \nonumber \\
P^n_-&= \{ \widetilde{\exp}(p)\Omega^n \in \widetilde{\mathrm{SU}}(1,1) | p^0<0,  \bp^2 =0   \}, \nonumber \\
H^n_\xi&= \{ \widetilde{\exp}(p)\Omega^n \in \widetilde{\mathrm{SU}}(1,1) | \bp^2= -\xi^ 2<0   \}.
\end{align}

The diagram on the right  in  Fig.~\ref{Cylinder}  shows  schematic sketches of exponential curves of the form
\bee
\label{expocurves}
\Gamma_{p,n}= \{\widetilde{\exp}(t p)\Omega^n| t\in [0,\infty)]\},
\eee
where $p$ is a fixed element of $\mathfrak{su}(1,1)$, and $n\in \ZZ$. In other words, these are images of the exponential map  with  chosen initial tangent vector $p$ translated by $\Omega^n$.  We stress that the cross section we are showing suppresses the three-dimensional nature of these curves. To illustrate this, we show three-dimensional plots of  some  exponential curves starting at the identity  in  Fig.~\ref{SampleGeodesics}. Note that the  spacelike and lightlike curves approach the boundary of the cylinder, but that the timelike curve winds  round the axis of the cylinder, carrying  out  a complete rotation when $\omega$ increases by $2\pi$

\begin{figure}
\begin{centering}
\includegraphics[width=11truecm]{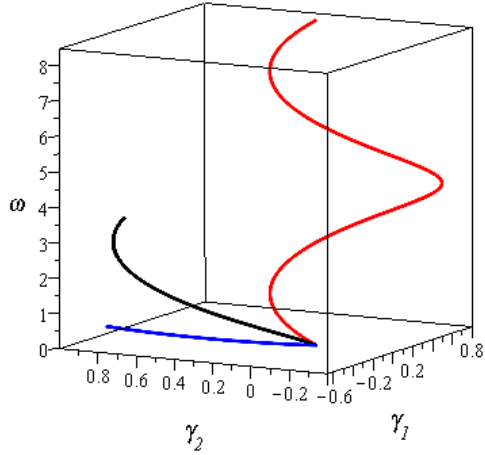}
\vspace{-6cm}
\caption{Exponential curves in $\widetilde{\mathrm{SU}}(1,1)$, obtained by exponentiating a spacelike (blue), lightlike (black) and timelike (red) tangent vector at the identity.}
\label{SampleGeodesics}
\end{centering}
\end{figure}

In order to compute the group Fourier transform we require  an expression for the Haar measure on $\widetilde{\mathrm{SU}}(1,1)$  in the coordinates \eqref{pnparametrisation}.
Using the abbreviations introduced in \eqref{expoformula}, we note that for  $
u= \exp(p)$, 
the left-invariant Maurer-Cartan form is 
\bee
u^{-1} \dd u = \lambda \dd |\bp| \hat{p}\cdot \bs  - 2 c  s\,  \dd \hat{p}\cdot \bs- 2 s^2 \hat{p}\times \dd \hat{p}\cdot{s},
\eee
where we use \eqref{pythagoras} and suppressed the argument $|\bp|$ of the functions $c$ and $s$ for readability. Thus 
\bee
[u^{-1} \dd u, u^{-1}\dd u] = 4s^2 \dd \hat{p}\times \dd \hat{p}\cdot \bs
- 4\lambda c s\, \dd  |\bp|\wedge  (\hat{p}\times \dd \hat{p})\cdot\bs + 4 \hat{p}^2 \lambda s^2 \dd |\bp| \wedge \dd\hat{p}\cdot\bs.
\eee
Thus, multiplying out and using again \eqref{pythagoras}, the Haar measure in exponential coordinates comes out  as 
\begin{align}
du &=\mathrm{tr}\left(u^{-1}\dd u \wedge u^{-1}\dd u \wedge u^{-1}\dd u\right) \nonumber \\
&= \frac{1}{2} \mathrm{tr}\left([u^{-1}\dd u,u^{-1}\dd u] \wedge u^{-1}\dd u\right)\nonumber \\
&=6 \lambda s^2\, \dd\hat{p}\times \dd\hat{p}\cdot \hat{p}\wedge \dd |\bp|.
\end{align}

Away from the set of measure zero where $\bp^2=0$, we have 
\bee 
d^3 \bp =\dd p^0\wedge \dd p^1 \wedge \dd p^2  = 
\frac 1 6 \dd\bp \times  \dd \bp \cdot \dd \bp 
= \frac 12 |\bp|^2  \dd\hat{p}\times \dd\hat{p}\cdot \hat{p}\wedge \dd  |\bp|, 
\eee
so that, again away from the set where $\bp^2=0$,
\bee
du=\rho(p)d^3\bp, \quad \text{with} \quad 
\rho(p)=  12   \lambda \frac{ s^2 ( |\bp|) }{ | \bp|^2 }.
\eee

Our parametrisation \eqref{pnparametrisation} of elements in $\widetilde{\mathrm{SU}}(1,1)$  requires both an element $p \in \mathfrak{su}(1,1)$ and an integer $n$. It is clear that a suitable non-commutative wave cannot depend only on  a dual variable $x\in \mathfrak{su}(1,1)^*$. It also requires an argument which is dual to the integer $n$. The most natural candidate is an angular coordinate, parametrising a circle $S^1$. The  necessity of a fourth and  circular dimension to describe the spacetime dual to $\widetilde{\mathrm{SU}}(1,1)$   is a surprise. We will introduce it and explore its consequences at this point, postponing a discussion to our final section. 

\begin{definition}We define non-commutative plane waves for $\widetilde{\mathrm{SU}}(1,1)$  as the maps
\begin{align}
\label{planewavedef}
E:  \widetilde{\mathrm{SU}}(1,1) \times (\mathfrak{su}(1,1)^* \times S^1) & \rightarrow  \CC, \nonumber \\
E(u;x,\varphi) =\frac{1}{\rho (p)} e^{i(\langle x,p\rangle + n\varphi)},
\end{align} 
where $p\in \mathfrak{su}(1,1)$ and $n\in \ZZ$ are the parameters determining $u$ via the decomposition \eqref{pnparametrisation}, and $\varphi\in[0,2\pi)$ is an angular coordinate on the circle $S^1$. 
\end{definition}

We need to check that the  non-commutative waves satisfy a suitable version of the completeness relation \eqref{completeness}. Expressing the $\delta$-function with respect to the left Haar measure  on $\widetilde{\mathrm{SU}}(1,1)$ in terms of the parameters $p$ and $n$ (see also B in \cite{GOR}),
\begin{equation}
\delta_e(g)=\frac{\delta_{n,0}}{\rho(p)}\delta^3(\bp),
\end{equation}
we confirm  the required condition:
\begin{align}
\frac{1}{(2\pi)^4}\int_{\mathfrak{su}(1,1)^* \times S^1} E(u;x,\varphi)\; d x d\varphi &=\frac{1}{2\pi} \int_{S^1}  e^{in\varphi }\,d \varphi  \; \frac{1}{(2\pi)^3}\int_{\RR ^3}  \,\frac{1}{\rho(p)}   e^{i\bp\cdot \bx}\, d^3 \bx \nonumber\\
&=\delta_{n,0}\, \frac{1}{\rho(p)}\delta^3(\bp) \nonumber \\
&=\delta_{e}(u).
\end{align}

Before we use our non-commutative waves to carry out the   group Fourier transform,  we make some observations and comments. 
With the terminology explained after \eqref{flatfourier}, the  non-commutative waves 
\bee
e^{i(\langle x,p\rangle + n\varphi)}= e^{i(\bx \cdot \bp + n\varphi)}
\eee
look like  standard plane waves on the product of  Minkowski space with a circle. However, the momentum $\bp$ is constrained by the conditions in \eqref{pnparametrisation}, so that timelike momenta have an invariant mass  which is  bounded from above by 
\bee
\label{Planckmass}
m_p= \frac{2\pi}{\lambda} =\frac{1}{4G},
\eee
and are always in the forward lightcone.  The existence of  the Lorentz-invariant Planck mass  $m_p$, and 
 hence also of an invariant Planck length, in the Lorentz double is one of its important features. It means in particular that it provides an example of a `doubly special theory of relativity' in 2+1 dimensions which neither deforms nor breaks Lorentz symmetry,  see \cite{Lessons3DGrav} and our Summary and Outlook  for further comments on this point.

It is natural to interpret the integer $n$ in the spirit of particle physics as a label for different kinds of particles in the theory. Timelike momenta $p$ with $n=-1$  may then be viewed as describing antiparticles. Lightlike and spacelike momenta for $n=0$ have the usual interpretation as momenta of massless or (hypothetical) tachyonic particles. The other values of $n$ describe additional types of massive, massless and tachyonic particles. Their existence is required by the fusion rules obeyed by the plane waves, which follow from the star product
\bee
\label{starwaves}
e^{i(\langle x,p_1\rangle + n_1\varphi)}\star e^{i(\langle x,p_2\rangle + n_2\varphi)}  =e^{i(\langle x,p(u_1u_2)\rangle + n(u_1u_2)\varphi)}.
\eee

The general features of this fusion rule can be read off from the picture of the conjugacy classes of $\widetilde{\mathrm{SU}}(1,1)$ on the left in Fig.~\ref{Cylinder}. When multiplying plane waves for particles of types  $n_1$ and  $ n_2$, the particle type of the combined system is determined by the product, in  $\widetilde{\mathrm{SU}}(1,1)$, of the group-valued momenta $u_1$ and $u_2$. This is a generalisation of the well-know Gott-pair  in 2+1 gravity, where two ordinary particles ($n=0$) with high relative speed can combine into a particle with tachyonic momentum (and, in our terminology, of type $n=1$).

Thus we think of the plane waves for  $\widetilde{\mathrm{SU}}(1,1)$ as describing kinematic states of particles in a theory  with an invariant mass scale $m_p$  and with infinitely many different types of particles which combine according to the fusion rules encoded in the star product.

The Fourier transform of a covariant field 
$
\tilde \phi: \widetilde{\mathrm{SU}} (1,1) \rightarrow \mathcal{H}_{l\pm}$
is given by
\begin{align}
\label{spacetimefield}
\phi(x,\varphi)   =\int  E(u;x,\varphi)\tilde\phi(u) \, du   
  =\sum_{n\in \ZZ} \int_{B_+} e^{i(\langle x,p\rangle + n\varphi)}\tilde \phi (u) \, d^3\bp,
\end{align}
where 
\bee
B_+= \{p =-\lambda \bp \cdot \bs \in \mathfrak{su}(1,1)| \lambda^2 \bp^2 <(2\pi)^2, p^0>0 \; \; \text{if} \; \bp^2 >0 \} 
\eee
is the region in momentum space required in the parametrisation \eqref{pnparametrisation}.  

Since the field $\phi$ takes values in the Hilbert space $\mathcal{H}_{l\pm}$, the extension of the $\star$-product \eqref{starwaves} to products of such fields  requires a careful tensor product decomposition. We will not pursue this here, but note that the  decomposition of  tensor products in the equivariant formulation of the UIRs for  quantum doubles of compact Lie groups was studied in detail in \cite{KBM}. A full study of the $\star$-product for covariant fields $\phi$ will require an extension to non-compact groups  and an adaptation  of the results of that paper to our covariant formulation.

Here, we focus on single fields and  apply the group   Fourier transform to   the deformed anyonic spin constraint  \eqref{DefAnyCon} and the mass  constraints \eqref{massconstraint} and \eqref{integerconstraint}. 
Expanding $u\in \mathrm{SU}(1,1)$ as in \eqref{pnparametrisation}, we first note that 
\bee
 D_{l\pm}(u)=  D_{l\pm}(\Omega^n\widetilde{\exp}(p)) = e^{2\pi i  ns } D_{l\pm}(\widetilde{\exp}(p)) = e^{2\pi i ns } e^{d_{l\pm}(p)}.
\eee
But then, with $\lambda m =\mu$ and the decomposition  \eqref{decompmu}, we also have 
\bee
e^{i\lambda m s} = e^{i\mu_0 s} e^{2\pi ins}. 
\eee
Hence, with $p=-\lambda s^ap_a$,  the momentum space spin constraint \eqref{DefAnyCon} is equivalent to 
\bee
\label{DefAnyConn}
\left(e^{-\lambda d_{l\pm}(s^a)p_a}-e^{i\mu_0 s} \right) \tilde{\phi}_{\pm}(u)=0.
\eee
The mass constraints \eqref{massconstraint} and \eqref{integerconstraint} also take a simple form in the parametrisation \eqref{pnparametrisation}. The former only see the fractional  part $\mu_0\in (0,2\pi)$ of  $\lambda m$, and is equivalent to
\bee
\label{newmassconstraint}
 \bp^2 =\frac{\mu_0^2}{\lambda^2}.
\eee
The condition   \eqref{integerconstraint}  simply fixes the integer $n$ in the decomposition \eqref{pnparametrisation}.

Applying the Fourier transform  \eqref{spacetimefield} turns the algebraic momentum space  constraints into differential equations.
 The mass constraint \eqref{newmassconstraint} becomes the  Klein-Gordon equation  for the fractional part of the mass:
 \bee
\left( (\partial_0^2-\partial_1^2-\partial_2^2)+\frac{\mu_0^2}{\lambda^2}\right)\phi(x,\varphi)=0.
 \eee
 The integer constraint \eqref{integerconstraint}  fixes the integer part of the mass
 via the differential condition on the angular dependence of $\phi$:
 \bee
 -i\frac{\partial}{\partial \varphi} \phi (x,\varphi)= n\phi(x,\varphi).
 \eee
Finally, the spin constraint \eqref{DefAnyConn}  becomes
\begin{align}
\label{exponentialspace}
\left( e^{ i\lambda d_{l\pm}(s^a)\partial_a}-e^{i\mu_0 s} \right) \phi_{\pm}(x,\varphi)=0.
\end{align}
This equation involves an  exponential of  the differential operators \eqref{anywave2+}and \eqref{anywave2-},  combining  spacetime derivatives  with complex derivatives in the hyperbolic disk. 
This is the anyonic generalisation of the exponential Dirac operator $e^{-\frac{\lambda}{2}\gamma^a\partial_a}\phi (x)$ that was obtained in \cite{SchroersWilhelm} for the massive spin $\frac{1}{2}$ particle. 

Similar exponential operators have been considered in a more general context in \cite{Atiyah}, where it was stressed that they are essentially finite difference operators.  The appearance of difference-differential equations was first  mentioned in relation to (2+1) gravity in \cite{Matschullparticle}. 

It is clear that  further work is required to make sense of the equation \eqref{exponentialspace}. Stripped down to its simplest elements (by reducing all dimensions to one), it is an equation of the form
\bee
e^{\lambda \frac{\dd}{\dd x} }\phi(x)=  e^{\lambda k}\phi(x),
\eee
or, assuming analyticity, 
\bee
\phi(x+\lambda)= e^{\lambda k} \phi(x).
\eee
For analytic functions, this is equivalent to the infinitesimal version
\bee
\frac{\dd\phi}{\dd x} = k \phi(x),
\eee
obtained by differentiating with respect to $\lambda$.  This simple example suggests that the anyonic constraint \eqref{anywave1} and the gravitational anyonic constrained \eqref{exponentialspace} may, suitably defined, be infinitesimal and finite versions of the same condition. However, careful analysis is required to clarify the definition of \eqref{exponentialspace} and its relation to \eqref{anywave1}.

\section{Summary and Outlook }

This paper was motivated by the observation that,  in the context of  2+1 dimensional quantum gravity,   the spin quantisation \eqref{SpinQuant} forces one to  consider the universal cover of the  Lorentz  group and that, in order to preserve the duality between momentum space and Lorentz transformations in the quantum double, it  is natural to take  the universal cover in momentum space, too.

We showed how  the representation theory of the quantum double of the universal cover 
$\widetilde{\mathrm{SU}}(1,1)$ can be cast in a Lorentz-covariant form,  and can be  Fourier transformed. In this process, the universal covering of the Lorentz group  necessitates   the use of infinite-component fields, but the universal covering of momentum space has more interesting and far-reaching consequences.

The first of these, exhibited  both in the  decomposition \eqref{decompmu} and in the parametrisation \eqref{pnparametrisation}, is the extension of the range of the allowed mass. The fractional part $\mu_0$ of $\mu=8\pi Gm$ is the conventional mass  of a particle, which manifests itself in classical (2+1)-dimensional gravity as a conical deficit angle in the spacetime surrounding  the particle. The integer label $n$ in the decomposition $\mu= \mu_0 +2\pi n$  appears  to be a purely quantum observable with no classical analogue. 
It manifests itself, for example, in the  Aharonov-Bohm scattering cross section of two massive particles, as discussed in \cite{BaisScatt}.  It is a rather striking illustration of the concept of  `quantum modular observables' as introduced in \cite{APP},  extensively discussed in the textbook \cite{AR} and recently  applied to the notion of spacetime in \cite{FLM}. 

We have chosen to interpret $n$ as  a label of different types  of particles or matter in (2+1)-dimensional quantum gravity. These particles can be converted into each other during interactions, according to fusion rules determined by the group product in  $\widetilde{\mathrm{SU}}(1,1)$ and the decomposition of factors and products according \eqref{pnparametrisation}.
  
The second and surprising consequence of the universal  covering of momentum space is the appearance of an additional and compact dimension on the dual side, in spacetime. This is  needed to define the group Fourier transform, and allows for a simple expression of the constraint determining the particle type $n$  as a differential condition.

Our results raise  a number of  questions and suggest   avenues for future research.  As  discussed at the end of the previous section, the exponentiated differential operators which generically appear as group Fourier transforms of  the spin constraint  should be studied using rigorous analysis. One expects these to be natural operators, possibly best defined as difference operators, not least because they are, by Lemma~\ref{ribbonlemma},  essentially Fourier transforms of the ribbon element of the quantum double.

It seems clear that our  Hilbert-space valued fields  $\phi$ on Minkowski space equipped with  a $\star$-product fit rather naturally into the framework of braided quantum field theories, defined in \cite{Oeckl} and studied, in a  Euclidean setting, in \cite{FL,SS}. Our paper is  only concerned with a single particle and we only looked at  simple examples of  fusion rules for two spinless particles.  However, braided quantum field theory naturally  provides the language for discussing the  gravitational interactions of several gravitational anyons  in a spacetime setting. This  provides an alternative viewpoint to existing  momentum space discussions, with the non-commutative $\star$-product and the universal $R$-matrix of the quantum double encoding the quantum-gravitational interactions.

It is worth stressing that the $\star$-product on Minkowski space considered here preserves Lorentz covariance, and that any braided quantum field theory constructed from it would similarly be Lorentz covariant. This follows  essentially from the Lorentz invariance of the mass scale  \eqref{Planckmass} and the associated Planck length scale \cite{NonCommutSch}. It reflects the important fact that the Lorentz double deforms Poincar\'e symmetry by introducing a  mass scale while preserving Lorentz symmetry, which is a challenge for any theory of quantum gravity. 

Finally, it would be interesting to repeat the analysis of this paper with the  inclusion of a cosmological constant. This leads to a $q$-deformation of the quantum double of  $\widetilde{\mathrm{SU}}(1,1)$ 
\cite{NonCommutSch}, and there should similarly be a $q$-deformation of the spacetime picture of the representations.  Some remarks on how this might work are made in \cite{SemiDual}, but none of the details have been worked out. In the Lorentzian context, a positive cosmological constant will  lead to real deformation parameter $q$ while a negative cosmological constant will lead to $q$ lying on the unit circle \cite{NonCommutSch}. It would clearly be interesting to understand how this change in $q$ captures the radically different physics of the two regimes.

\vspace{0.5cm}
\noindent {\bf Acknowledgements} \; SI  acknowledges support through an EPSRC doctoral training grant. Several of the results in this paper were reported by BJS at the Workshop on Quantum Groups in Quantum Gravity,  University of Waterloo 2016. BJS thanks the organisers for the invitation, and acknowledges discussions with participants at the workshop. We thank Peter Horvathy for drawing our attention to Plyushchay's work on  anyonic wave equations and to their joint work on noncommutative waves.

\end{document}